\newcommand{\sets}{\setstretch{1.1}}
\newcommand{\be}{\begin{equation}}
\newcommand{\ee}{\end{equation}}
\newcommand{\ba}{\begin{aligned}}
\newcommand{\ea}{\end{aligned}}
\newcommand{\R}{\mathbb{R}}
\newcommand{\N}{\mathbb{N}}
\newcommand{\ind}{\mathbf{1}}
\newcommand{\lsi}{\left[\negthinspace\left[}
\newcommand{\rsi}{\right]\negthinspace\right]}
\newcommand{\A}{\mathcal{A}}
\newcommand{\B}{\mathcal{B}}
\newcommand{\cP}{\mathcal{P}}
\newcommand{\M}{\mathcal{M}}
\newcommand{\F}{\mathcal{F}}
\newcommand{\FF}{\mathbb{F}}
\newcommand{\G}{\mathcal{G}}
\newcommand{\GG}{\mathbb{G}}
\newcommand{\calS}{\mathcal{S}}
\newcommand{\tN}{\widetilde{N}}
\newcommand{\bit}{\bibitem}
\renewcommand{\phi}{\varphi}
\newtheorem{Thm}{\bf Theorem}[section]
\newtheorem{Def}[Thm]{\bf Definition}
\newtheorem{Prop}[Thm]{\bf Proposition}
\newtheorem{Lem}[Thm]{\bf Lemma}
\newtheorem{Cor}[Thm]{\bf Corollary}
\theoremstyle{remark}
\newtheorem{Rem}[Thm]{\bf Remark}
\newtheorem{Pb}[Thm]{\bf Problem}
\newtheorem{Ass}[Thm]{\bf Assumption}
\newenvironment{myenumerate}{%

\begin{enumerate}}{\end{enumerate}}
\numberwithin{equation}{section}
\renewcommand*\@fnsymbol[1]{\the#1}
\title{Market viability and martingale measures under partial information}
\author{Claudio Fontana\thanks{INRIA Paris-Rocquencourt, Domaine de Voluceau, Rocquencourt, BP 105, Le Chesnay Cedex, 78153, France, e-mail: {\tt claudio.fontana@inria.fr}.}
\and
Bernt \O ksendal\thanks{Center of Mathematics for Applications (CMA), Dept. of Mathematics, University of Oslo, P.O. Box 1053 Blindern, N--0316 Oslo, Norway, e-mail: {\tt oksendal@math.uio.no}.}
\and
Agn\`es Sulem\thanks{INRIA Paris-Rocquencourt, Domaine de Voluceau, Rocquencourt, BP 105, Le Chesnay Cedex, 78153, France, e-mail: {\tt agnes.sulem@inria.fr}.}}
\date{This version: October 15, 2013}
\begin{document}

\maketitle

\abstract{\begin{spacing}{1}\noindent We consider a financial market model with a single risky asset whose price process evolves according to a general jump-diffusion with locally bounded coefficients and where market participants have only access to a partial information flow. For any utility function, we prove that the partial information financial market is locally viable, in the sense that the optimal portfolio problem has a solution up to a stopping time, if and only if the (normalised) marginal utility of the terminal wealth generates a partial information equivalent martingale measure (PIEMM). This equivalence result is proved in a constructive way by relying on maximum principles for stochastic control problems under partial information. We then characterize a global notion of market viability in terms of partial information local martingale deflators (PILMDs). We illustrate our results by means of a simple example.\end{spacing}}
\vspace{0.5cm}

\begin{spacing}{1}\noindent 
\textbf{Keywords:} Optimal portfolio, jump-diffusion, partial information, maximum principle, BSDE, viability, utility maximization, martingale measure, martingale deflator.\end{spacing}\vspace{0.3cm}

\noindent \textbf{MSC (2010):} 60G44, 60G51, 60G57, 91B70, 91G80, 93E20, 94A17.

\section{Introduction}	\label{S1}

The concepts of no-arbitrage, martingale measure and optimal portfolio can be rightly considered as the cornerstones of modern mathematical finance, starting from the seminal papers \cite{HK,K}. Loosely speaking, the no-arbitrage requirement is equivalent to the existence of a martingale measure, which can then be used for pricing purposes (risk-neutral valuation), and, again loosely speaking, portfolio optimization problems are solvable if and only if arbitrage profits cannot be obtained by trading on the market. 

In the context of discrete-time models on a finite probability space (see e.g. \cite{PR,P}), it can actually be shown that the above concepts are equivalent and, furthermore, one can work out explicitly the connections between them. 
In particular, portfolio optimization problems and solvable if and only if there exists an equivalent martingale measure (EMM) and, moreover, one can obtain an EMM by taking the (normalised) marginal utility of the optimal terminal wealth. This relation also represents a classical and well-known result from the economic literature (see e.g. \cite{B}, Section 4.4). In the case of discrete-time models on a general probability space, the validity of this equivalence has been studied in \cite{RS,Sch1,Sch2}.

When one moves from discrete-time to continuous-time financial models, then things become quickly more complicated and the equivalences discussed so far do not hold any more in full generality. For instance, in order to recover the equivalence between EMMs and no-arbitrage, one has to replace the notion of martingale with the notion of local martingale (or $\sigma$-martingale) and the condition of no-arbitrage with the \emph{no free lunch with vanishing risk} (NFLVR) criterion adopted in \cite{DS0,DS2} (see also \cite{Frit} for an equivalent characterization of NFLVR in terms of \emph{no market free lunch}, a no-arbitrage criterion based on the structure of investors' preferences). Furthermore, the marginal utility of the optimal terminal wealth does not necessarily yield the density of an equivalent (local-/$\sigma$-)martingale measure but only the terminal value of a \emph{supermartingale deflator} (see e.g. \cite{KS,S}).

In the present paper, we consider a general jump-diffusion model with locally bounded coefficients and study the issue of the \emph{viability} of the financial market, defined as the ability to solve a portfolio optimization problem. Our main goal consists in characterizing the notion of viability in terms of martingale measures, in a sense to be made precise in the following, studying under which conditions the marginal utility of terminal wealth gives rise to a martingale measure. We refrain from a-priori imposing no-arbitrage restrictions on the model, tackling instead directly the solvability of portfolio optimization problems. Furthermore, we suppose that market participants have only access to a partial information flow, which does not reveal the full information of the stochastic basis.
In order to solve portfolio optimization problems under partial information, we shall employ necessary and sufficient maximum principles for stochastic control problems under partial information, as discussed in \cite{BO} (see also the recent paper \cite{OS2} for related results in the complete information case). This approach allows us to characterize the optimal solution via an associated BSDE, which in turn requires a good control on the integrability properties of the processes involved. Since such integrability conditions are not satisfied in general, we need to resort to a localization procedure, as explained in Section \ref{S3}.

The main contributions of the present paper can be outlined as follows:
\begin{myenumerate}
\item
we show that the financial market under partial information is \emph{locally viable}, in the sense that a portfolio optimization problem admits a solution up to a stopping time, if and only if there exists a \emph{partial information equivalent martingale measure} (PIEMM) up to a stopping time. Furthermore, the density of such PIEMM is given by the (normalised) marginal utility of the optimal terminal wealth, thus recovering the classical result of financial economics;
\item
we prove that, if the financial market under partial information is \emph{globally viable}, in the sense that it is locally viable for a sequence $\{\tau_n\}_{n\in\N}$ of increasing stopping times, then there exists a \emph{partial information local martingale deflator} (PILMD). Furthermore, we show that such PILMD can be constructed by aggregating the densities of all local PIEMMs if and only if the locally optimal portfolios satisfy a consistency condition;
\item
as a special case, if the price process has bounded coefficients, we prove that the financial market is viable on the global time horizon if and only if the (normalised) marginal utility of the optimal terminal wealth defines a PIEMM on the global time horizon;
\item
by means of a simple and classical example (see Section \ref{S5}) we show that, even for regular utility functions and continuous-path processes with good integrability properties but unbounded coefficients, a PIEMM may fail to exist globally but, nevertheless, a PILMD exists.
\end{myenumerate}

To the best of our knowledge, the issue of linking the viability of the financial market to the existence of weaker counterparts of equivalent martingale measures such as PIEMMs and PILMDs has never been dealt with in the partial information case. Furthermore, we go significantly beyond a pure existence result, since our approach allows us to obtain a precise and explicit connection between the solution to an optimal portfolio problem and the density of an equivalent martingale measure / local martingale deflator, in a local as well as in a global perspective (see Sections \ref{S3} and \ref{S4}, respectively). 
In that regard, our paper contributes to the literature dealing with utility maximization problems under partial information, see e.g. the papers \cite{BDL,L1,L2,Stet}.

Of course, our results are reminiscent of the dual approach to portfolio optimization, developed in the general semimartingale setting in the papers \cite{KS,S}, where optimal portfolios (solutions to the primal problem) are linked to supermartingale deflators (solutions to the dual problem). 
In the general setting of \cite{KS,S}, it is shown that the marginal utility of optimal terminal wealth defines a martingale measure with respect to a price system which uses as numéraire the optimal wealth process itself. In contrast, in the present paper we focus on characterizing the validity of the marginal utility measure for the original price system, without the need of changing the numéraire. Note also that, in the classical dual approach to portfolio optimization, the standing assumption is that the set of equivalent (local-/$\sigma$-) martingale measures is non-empty. Here, we opt for a different route and show that the existence of martingale measures / local martingale deflators and, hence, the no-arbitrage properties of the model, come as consequences of the viability of the financial market.

This paper is also closely related to the recent strand of literature that deals with financial models going beyond the traditional setting based on EMMs, relaxing the NFLVR requirement. One of the first studies in this direction is the paper \cite{LW}, where the authors are concerned with the viability of a complete It\^o-process model. In particular, they show that the financial market can be viable, in the sense that portfolio optimization problems can be meaningfully solved, even if the NFLVR condition does not necessarily hold (in a related context, see also \cite{FR} for an analysis of pricing and hedging problems in the absence of EMMs). In a general semimartingale framework, Proposition 4.19 of \cite{KK} shows that the minimal no-arbitrage requirement in order to solve expected utility maximization problems amounts to the \emph{no unbounded profit with bounded risk} (NUPBR) condition, the latter being weaker than NFLVR. 
Moreover, it has been recently proven in \cite{CDM} that the NUPBR condition is equivalent to the local solvability of portfolio optimization problems in a general semimartingale setting. The idea of adopting a local point of view is also taken up in the present paper, although in our context it is motivated by the need of ensuring good integrability properties, since we rely on BSDE methods for solving portfolio optimization problems. Moreover, we put a strong emphasis on the explicit connection between (local) market viability and the validity of the marginal utility measure.

The paper is structured as follows. Section \ref{S2} presents the modeling framework. In Section \ref{S3}, we prove the equivalence between local market viability and the existence of a PIEMM up to a stopping time. More specifically, this requires first to characterize optimal portfolios in terms of the solutions to an associated BSDE (Section \ref{S3.1}) and then to characterize the family of density processes of all PIEMMs (Section \ref{S3.2}); the main equivalence result is then proved in Section \ref{S3.3}. Section \ref{S4} deals with the issue of the global viability of the financial market, first in the simpler case of bounded coefficients (Section \ref{S4.1}) and then in the more general locally bounded case (Section \ref{S4.2}). Section \ref{S5} closes the paper by illustrating some of the main concepts and results in the context of a simple example. Finally, the Appendix collects some technical proofs of intermediate results.

\section{The modeling framework}	\label{S2}

On a given probability space $(\Omega,\G,P)$, let us consider a Brownian motion $B=\{B(t);t\geq 0\}$ and a homogeneous Poisson random measure $N(\cdot,\cdot)$ on $\R_+\!\times\R$, in the sense of Definition II.1.20 of \cite{JS}, independent of $B$. Let $\G=(\G_t)_{t\geq 0}$ be the filtration generated by $B$ and $N$, assumed to satisfy the usual conditions of right-continuity and $P$-completeness, and denote by $\cP_{\GG}$ the predictable $\sigma$-field of $\GG$. We denote by $m(dt,d\zeta):=\nu(d\zeta)\,dt$ the compensator of the random measure $N(dt,d\zeta)$, where $\nu$ is a $\sigma$-finite measure on $\bigl(\R,\B(\R)\bigr)$, and by $\tN(dt,d\zeta):=N(dt,d\zeta)-\nu(d\zeta)\,dt$ the corresponding compensated random measure. Finally, we let $T\in(0,\infty)$ represent a fixed investment horizon.

As mentioned in the introduction, we are interested in financial models where agents do not have access to the \emph{full information} filtration $\GG$. To this effect, we introduce a filtration $\FF=(\F_t)_{t\geq0}$, which represents the \emph{partial information} actually available. We assume that the filtration $\FF$ satisfies the usual conditions and that $\F_t\subseteq\G_t$, for all $t\geq0$, and we denote by $\cP_{\FF}$ the predictable $\sigma$-field of $\FF$.

We consider an abstract financial market with two investment possibilities (all the results of the present paper can be generalised to multi-dimensional models without significant difficulties):
\begin{myenumerate}
\item a risk-free asset with unit price $S^0(t) = 1$, for all $t\in[0,T]$;
\item a risky asset, with unit (discounted) price $S(t)$ given by the solution to the SDE
\end{myenumerate}
\be	\label{S}	\left\{ \ba
dS(t) & = b(t)\,dt + \sigma(t)\,dB(t) + \int_\R\!\gamma(t,\zeta)\,\tN(dt,d\zeta), \quad t\in[0,T]; \\
S(0) &= S_0\in\R,
\ea	\right. \ee
where the two processes $b=\bigl\{b(t);t\in[0,T]\bigr\}$ and $\sigma=\bigl\{\sigma(t);t\in[0,T]\bigr\}$ are $\cP_{\GG}$-measurable and $\gamma:\Omega\times[0,T]\times\R\rightarrow\R$ is a predictable function in the sense of \cite{JS}, i.e., $\cP_{\GG}\otimes\B(\R)$-measurable, and integrable with respect to $N$. We refer the reader to Section II.1 of \cite{JS} and to the monograph \cite{OS1} for more information on stochastic calculus with respect to Poisson random measures. We also impose the following assumption on $b$, $\sigma$ and $\gamma$ and on the sub-filtration $\FF$:

\begin{Ass}	\label{lcl-bdd}
\begin{myenumerate}
\item
The $\cP_{\GG}$-measurable processes $b=\bigl\{b(t);t\in[0,T]\bigr\}$ and $\sigma=\bigl\{\sigma(t);t\in[0,T]\bigr\}$ as well as the $\cP_{\GG}\otimes\B(\R)$-measurable function $\gamma$ are $\FF$-locally bounded;
\item
$\FF^S\subseteq\FF$, i.e., $\F_t^S\subseteq\F_t$ for all $t\in[0,T]$, where $\F_t^S$ is the $\sigma$-algebra generated by $\bigl\{S(u);u\in[0,t]\bigr\}$.
\end{myenumerate}
\end{Ass}

As can be easily verified, part (i) of Assumption \ref{lcl-bdd} implies that there exists a common sequence of $\FF$-stopping times $\{\tau_n\}_{n\in\N}$ with $\tau_n\nearrow+\infty$ $P$-a.s. as $n\rightarrow+\infty$ such that the stopped processes $S^{\tau_n}$, $b^{\tau_n}$, $\sigma^{\tau_n}$ and $\gamma(\cdot\wedge\tau_n,\cdot)$ are $P$-a.s. uniformly bounded, for every $n\in\N$.
In particular, part (i) of Assumption \ref{lcl-bdd} is always satisfied for the processes $b$ and $\sigma$ if they are $\cP_{\FF}$-measurable and left-continuous or right-continuous with limits from the left (see e.g. \cite{HWY}, Theorem 7.7). In view of part (ii) of the assumption, this is for instance the case if $b(t)$ and $\sigma(t)$ are given as continuous functions of $S(t-)$.
Part (ii) of Assumption \ref{lcl-bdd} implies that every market participant can observe the evolution of the (discounted) price of the risky asset $S$. Note, however, that the filtration $\FF^S$ is in general strictly smaller than $\GG$, since the observation of the price process $S$ does not suffice to unveil the two sources of randomness $B$ and $N$.

\begin{Rem}	\label{gen-case}
It is worth pointing out that the main results of the present paper can be obtained under weaker assumptions, at the expenses of greater technicalities. Indeed, Assumption \ref{lcl-bdd} can be significantly relaxed, by only assuming that $b$, $\sigma$ and $\gamma$ are $\GG$-locally bounded, i.e., with respect to a sequence $\{\tilde{\tau}_n\}_{n\in\N}$ of $\GG$-stopping times, without any further assumption on the sub-filtration $\FF$ \footnote{This would then allow to also consider the cases of \emph{delayed information}, where $\F=\G_{(t-\delta)^+}$, for some $\delta>0$, and of \emph{discrete observations}, where $\F_t=\bar{\F}^S_t$, with $\bar{\F}^S_t$ being the $\sigma$-algebra generated by $\{S(t_i);0=t_0<t_1<\ldots<t_n\leq t\}$, $n\in\N$.}. In that case, one can formulate all our results in terms of optional projections on the sub-filtration $\FF$ (see e.g. \cite{HWY}, Chapter V)\footnote{Let us recall that, for any bounded measurable process $Y=\bigl\{Y(t);t\in[0,T]\bigr\}$, the $\FF$-optional projection is defined as the unique $\FF$-optional bounded process $^oY=\bigl\{^oY(t);t\in[0,T]\bigr\}$ such that $E\bigl[Y(\tau)\ind_{\{\tau<\infty\}}|\F_{\tau}\bigr]=\,^oY(\tau)\ind_{\{\tau<\infty\}}$ $P$-a.s. for every $\FF$-stopping time $\tau$. In particular, we have $^oY(t)=E\bigl[Y(t)|\F_t\bigr]$ $P$-a.s. for every $t\in[0,T]$.}. However, since this would lead to less transparent and more technical definitions of the main concepts, we prefer to impose the stronger Assumption \ref{lcl-bdd}.
\end{Rem}

We say that a function $U:(-\infty,\infty]\rightarrow[-\infty,\infty)$ of class $\mathcal{C}^1$ on $(-\infty,\infty)$ is a \emph{utility function} if it is concave and strictly increasing on $(-\infty,\infty)$ and we denote by $U'$ its first derivative (\emph{marginal utility}).
Aiming at describing the activity of trading in the financial market on the basis of the partial information represented by the sub-filtration $\FF$ and according to the preference structure represented by the utility function $U$, we define the family $\A_{\FF}^U$ of \emph{admissible strategies} as follows:
\be	\label{adm}
\A^U_{\FF} := \left\{\text{all $\cP_{\FF}$-measurable processes }\phi=\bigl\{\phi(t);t\in[0,T]\bigr\}
\text{ s.t. }X_{\phi}\in\calS^2
\text{ and }E\bigl[U'\bigl(X_{\phi}(T)\bigr)^2\bigr]<\infty\right\}
\ee
where $\phi_t$ represents the number of units of the risky asset held in the portfolio at time $t$, for all $t\in[0,T]$, with associated wealth process $X_{\phi}=\bigl\{X_{\phi}(t);t\in[0,T]\bigr\}$, and where $\calS^2$ denotes the family of all $\GG$-semimartingales $Y=\bigl\{Y(t);t\in[0,T]\bigr\}$ satisfying $E\bigl[\,\sup_{t\in[0,T]}\left|Y(t)\right|^2\bigr]<\infty$. The requirement of $\FF$-predictability amounts to ensuring that agents trade by relying only on the partial information at their disposal, while the square-integrability requirement is an indispensable integrability condition in order to solve utility maximization problems via BSDE techniques, as shown in the next section.

As usual, we assume that trading is done in a self-financing way, so that the wealth process associated to a given strategy $\phi\in\A_{\FF}^U$ starting from an initial endowment $x\in\R$ is given by
\be \label{wealth}	\left\{	\ba
dX_\varphi(t) & = \varphi(t)\,dS(t) 
= \varphi(t) \left(b(t)\,dt + \sigma(t)\,dB(t) + \int_\R\!\gamma(t,\zeta)\,\tN(dt,d\zeta)\right), 
\quad t\in[0,T]; \\
X_\varphi(0) &= x.
\ea	\right. \ee

\begin{Rem}
\begin{myenumerate}
\item
Due to part (ii) of Assumption \ref{lcl-bdd} together with Theorem 3.1 of \cite{Str}, the process $S$ is an $\FF$-semimartingale. In particular, this implies that the wealth process $X_{\phi}$ is well-defined as a stochastic integral and, hence, a semimartingale in the partial information filtration $\FF$.
\item
Defining the class of admissible strategies as in \eqref{adm} ensures that $E\bigl[U\bigl(X_{\phi}(T)\bigr)\bigr]<\infty$ for all $\phi\in\A_{\FF}^U$. Indeed, due to the concavity of $U$, we have $U\bigl(X_{\phi}(T)\bigr)\leq U(x)+U'(x)\bigl(X_{\phi}(T)-x\bigr)$. Since $X_{\phi}\in\calS^2$, this implies that $U\bigl(X_{\phi}(T)\bigr)\in L^2(P)$.
\end{myenumerate}
\end{Rem}

We want to emphasize that we do not a-priori impose any no-arbitrage restriction on the financial market model. In the remaining part of the paper, the no-arbitrage properties of the model will be inferred as consequences of the (local) solvability of a portfolio optimization problem.

\section{Local market viability under partial information}	\label{S3}

In the present section we prove the equivalence between the concept of local market viability, introduced below in Definition \ref{viability-loc}, and the local existence of a partial information equivalent martingale measure (PIEMM; see Definition \ref{PIEMM-loc}) such that its density is expressed in terms of the marginal utility of terminal wealth. 
To this effect, as a first step (Section \ref{S3.1}), we shall characterize the solutions to portfolio optimization problems by applying suitable versions of the maximum principles developed in \cite{BO} for stochastic control problems under partial information. As a second step (Section \ref{S3.2}), we provide a characterization of the density processes of PIEMMs. 
From a technical point of view, the need to embark on a local analysis arises from the integrability properties required in the above two steps (see in particular the proofs of Propositions \ref{BSDE-opt} and \ref{PIEMM}). Hence, in order to have  good integrability properties, we shall rely on the local boundedness assumption (see part (i) of Assumption \ref{lcl-bdd}; under stronger assumptions, a direct global result will be proved in Section \ref{S4.1}).
Until the end of Section \ref{S3}, we fix an element $n\in\N$ and let $\tau_n$ be the corresponding $\FF$-stopping time from the sequence $\{\tau_n\}_{n\in\N}$ introduced after Assumption \ref{lcl-bdd}. We denote by $b^n$ the stopped process $b^n:=\bigl\{b(t\wedge\tau_n);t\in[0,T]\bigr\}$, with an analogous notation for $\sigma^n$ and $\gamma^n$.

\begin{Pb}[Partial information locally optimal portfolio problem]	\label{utility-loc}
For a fixed $n\in\N$, for a given utility function $U$ and an initial endowment $x\in\R$, find an element $\phi^{*,n}\in\A_{\FF}^U(n)$ such that 
\[
\underset{\phi\in\A_{\FF}^U(n)}{\sup}E\Bigl[U\bigl(X_{\phi}(T\wedge\tau_n)\bigr)\Bigr]
= E\Bigl[U\bigl(X_{\phi^{*,n}}(T\wedge\tau_n)\bigr)\Bigr]
<\infty
\]
where $\A^U_{\FF}(n):=\Bigl\{\text{all $\cP_{\FF}$-measurable processes }\phi=\bigl\{\phi(t);t\in[0,T]\bigr\}\text{ s.t. }\phi\ind_{\lsi0,\tau_n\rsi}\in\A^U_{\FF}\Bigr\}$.
\end{Pb}

We now define the notion of market viability in terms of the solvability of the partial information locally optimal portfolio problem\footnote{We want to mention that defining the notion of market viability in terms of the solvability of portfolio optimization problems has a long history in financial mathematics, going back to the seminal paper \cite{HK} (see also \cite{LW} in a related context).}.

\begin{Def}[Local market viability]	\label{viability-loc}
Let $U$ be a utility function. The financial market is said to be \emph{locally viable up to $\tau_n$} if Problem \ref{utility-loc} admits an optimal solution $\phi^{*,n}\in\A_{\FF}^U(n)$.
\end{Def}

\subsection{A BSDE characterization of locally optimal portfolios}	\label{S3.1}

As a first step, we provide a characterization of the locally optimal portfolio which solves Problem \ref{utility-loc} in terms of the solution to a backward stochastic differential equation (BSDE), by relying on necessary and sufficient maximum principles for stochastic control under partial information, see e.g. \cite{BO,OS2}. 

We define the Hamiltonian $H^n:\Omega\times[0,T]\times\R^3\times\mathcal{R}\rightarrow\R$ as follows:
\be	\label{Hamil}
H^n\bigl(\omega,t,\phi,p,q,r(\cdot)\bigr) 
:= \phi\,b^n(\omega,t)\,p
+\phi\,\sigma^n(\omega,t)\,q
+\phi\int_\R\!r(\zeta)\,\gamma^n(\omega,t,\zeta)\nu(d\zeta)
\ee
where $\mathcal{R}$ is defined as the class of functions $r:\R\setminus\{0\}\rightarrow\R$ such that the integral in \eqref{Hamil} converges. To the Hamiltonian $H^n$ we associate a BSDE for the adjoint processes $p^n=\bigl\{p^n(t);t\in[0,T]\bigr\}$, $q^n=\bigl\{q^n(t);t\in[0,T]\bigr\}$ and for the function $r^n:\Omega\times[0,T]\times\R\rightarrow\R$ as follows, for any $\phi\in\A_{\FF}^U(n)$:
\be	\label{BSDE}	\left\{	\ba
dp^n(t) &= q^n(t)\,dB(t)+\int_\R\!r^n(t,\zeta)\,\tN(dt,d\zeta),	\quad t\in[0,T];\\
p^n(T) &= U'\bigl(X_{\phi}(T\wedge\tau_n)\bigr)\,.
\ea	\right.	\ee
In order to study the BSDE \eqref{BSDE}, we need to introduce the following classes of processes:
\begin{gather*}
\M^2 := \left\{\text{all $\GG$-martingales }M=\bigl\{M(t);t\in[0,T]\bigr\}
\text{ s.t. }E\biggl[\,\sup_{t\in[0,T]}|M(t)|^2\biggr]<\infty\right\};	\\
L^2(B) :=  \left\{\text{all $\cP_{\GG}$-measurable processes }q=\bigl\{q(t);t\in[0,T]\bigr\}
\text{ s.t. }E\biggl[\int_0^T\!\!q^2(t)dt\biggr]<\infty\right\};	\\
G^2(\tN) :=  \left\{\text{all $\cP_{\GG}\otimes\B(\R)$-measurable functions }r
\text{ s.t. }E\biggl[\int_0^T\!\!\int_\R r^2(t,\zeta)\nu(d\zeta)dt\biggr]<\infty\right\}.
\end{gather*}

\begin{Lem}	\label{BSDE-sol}
For any fixed $n\in\N$ and $\phi\in\A^U_{\FF}(n)$, the BSDE \eqref{BSDE} admits a unique solution $(p^n,q^n,r^n)\in\M^2\times L^2(B)\times G^2(\tN)$. Furthermore, the $\GG$-martingale $p^n$ satisfies $E\left[\int_0^T\!\bigl(p^n(t)\bigr)^2dt\right] < \infty$.
\end{Lem}
\begin{proof}
See the Appendix.
\end{proof}

\begin{Prop}	\label{BSDE-opt}
For any fixed $n\in\N$, an element $\phi\in\A_{\FF}^U(n)$ solves Problem \ref{utility-loc} if and only if the solution $(p^n,q^n,r^n)\in\M^2\times L^2(B)\times G^2(\tN)$ to the corresponding BSDE \eqref{BSDE} satisfies the following condition $P$-a.s. for a.a. $t\in[0,T\wedge\tau_n]$:
\be	\label{BSDE-cond}
E\left[\frac{\partial H}{\partial \phi}\bigl(t,\phi(t),p^n(t),q^n(t),r^n(t)\bigr)\Bigr|\F_t\right]
= E\biggl[b^n(t)\,p^n(t) + \sigma^n(t)\,q^n(t)
+\int_\R\!\gamma^n(t,\zeta)\,r^n(t,\zeta)\,\nu(d\zeta)\Bigr|\F_t\biggr]=0\,.
\ee
\end{Prop}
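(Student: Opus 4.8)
The plan is to prove this equivalence by invoking the necessary and sufficient stochastic maximum principles for partial-information control, as developed in \cite{BO}, with the Hamiltonian $H^n$ of \eqref{Hamil} and the adjoint BSDE \eqref{BSDE}. The key structural fact to exploit is that the wealth dynamics \eqref{wealth} are \emph{linear} in the control $\phi$, and that the running cost is zero while the terminal cost $U\bigl(X_\phi(T\wedge\tau_n)\bigr)$ is concave in the terminal wealth. Because of this linear-convex structure, the sufficient maximum principle applies without a separate concavity-of-the-Hamiltonian verification beyond what concavity of $U$ already gives, so the first-order condition will turn out to be not merely necessary but also sufficient for optimality.

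First I would set up the adjoint equation correctly: the terminal condition $p^n(T)=U'\bigl(X_\phi(T\wedge\tau_n)\bigr)$ is exactly the marginal utility, and Lemma \ref{BSDE-sol} guarantees that for each admissible $\phi\in\A^U_{\FF}(n)$ this BSDE has a unique solution $(p^n,q^n,r^n)\in\M^2\times L^2(B)\times G^2(\tN)$ with the stated integrability. I would then compute $\partial H^n/\partial\phi$, which by the explicit form of \eqref{Hamil} is simply
\[
\frac{\partial H^n}{\partial\phi}\bigl(t,\phi(t),p^n(t),q^n(t),r^n(t)\bigr)
= b^n(t)\,p^n(t)+\sigma^n(t)\,q^n(t)+\int_\R\!\gamma^n(t,\zeta)\,r^n(t,\zeta)\,\nu(d\zeta),
\]
so the condition \eqref{BSDE-cond} is precisely the statement that the $\F_t$-conditional expectation of this partial derivative vanishes $P$-a.s. for a.a. $t\in[0,T\wedge\tau_n]$. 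The partial-information feature enters exactly here: since admissible $\phi$ are $\cP_{\FF}$-measurable, the relevant variational inequality is obtained only after projecting onto $\FF$, which is why the first-order condition is phrased with $E[\,\cdot\,|\F_t]$ rather than with the derivative itself.

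The necessity direction (optimal $\Rightarrow$ \eqref{BSDE-cond}) I would derive by a standard variational argument: for $\phi$ optimal and any $\beta\in\A^U_{\FF}(n)$, consider the perturbation $\phi+\varepsilon\beta$, differentiate the objective $\varepsilon\mapsto E\bigl[U\bigl(X_{\phi+\varepsilon\beta}(T\wedge\tau_n)\bigr)\bigr]$ at $\varepsilon=0$, and use the linearity of the wealth map together with an integration-by-parts (It\^o product rule) against the adjoint process $p^n$ to rewrite the Gateaux derivative as $E\bigl[\int_0^{T\wedge\tau_n}\beta(t)\,(\partial H^n/\partial\phi)(t)\,dt\bigr]$. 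Because $\beta$ ranges over all $\cP_{\FF}$-measurable processes, the tower property lets me move the conditional expectation onto the derivative, and the arbitrariness of $\beta$ forces \eqref{BSDE-cond}. For sufficiency, concavity of $U$ gives $U(y)-U(x)\le U'(x)(y-x)$, which bounds the objective difference $E\bigl[U(X_\beta(T\wedge\tau_n))-U(X_\phi(T\wedge\tau_n))\bigr]$ from above by a term of the form $E\bigl[p^n(T)\bigl(X_\beta(T\wedge\tau_n)-X_\phi(T\wedge\tau_n)\bigr)\bigr]$, and the same integration-by-parts plus \eqref{BSDE-cond} shows this bound is nonpositive, so $\phi$ is optimal.

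The main obstacle I anticipate is purely technical: justifying the interchange of differentiation and expectation (and the integrability of all the cross terms arising in the It\^o product rule for $p^n X$) so that the variational computation is rigorous. This is exactly where the square-integrability built into $\A^U_{\FF}$ via \eqref{adm}, the stopping at $\tau_n$, and the $\M^2\times L^2(B)\times G^2(\tN)$ bounds from Lemma \ref{BSDE-sol} do their work: they guarantee that the stochastic integrals $\int q^n\,dB$ and $\int\!\int r^n\,\tN$ are genuine martingales with zero expectation, and that the product $p^n X$ has an integrable quadratic covariation, so no spurious boundary terms survive. I would handle this by a localization-and-dominated-convergence argument along a further sequence of stopping times reducing the local martingale parts to true martingales, which is precisely the reason the whole analysis has been localized up to $\tau_n$.
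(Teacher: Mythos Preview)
Your proposal is correct and follows essentially the same route as the paper's own proof: sufficiency via the concavity bound $U(y)-U(x)\le U'(x)(y-x)$ combined with the It\^o product rule for $p^n(X_{\bar\phi}-X_\phi)$, and necessity via a perturbation $\phi+\delta\beta$ with $\beta$ of the form $\ind_A\ind_{[t_0,t_0+h]}$, in both cases using an auxiliary localizing sequence and dominated convergence (justified by $p^n\in\M^2$, $(q^n,r^n)\in L^2(B)\times G^2(\tN)$, and $X_\phi\in\calS^2$) to turn the local martingale terms into true martingales.
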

\begin{proof}
See the Appendix.
\end{proof}

Condition \eqref{BSDE-cond} also admits an alternative formulation, in terms of the generalised Malliavin derivatives of the marginal utility $U'$. To this effect, recall the generalised Clark-Ocone theorem (see \cite{Aa} for the Brownian motion case and Theorem 3.28 of \cite{DOP} for the L\'evy process case) which states that if the random variable $F \in L^2(P)$ is $\G_T$-measurable, then it can be written as
\[
F = E[F] + \int_0^T\!\!E[D_t F\,|\G_t]\,dB(t) + \int_0^T\!\!\int_\R E[D_{t,\zeta} F\,|\G_t]\,\tN(dt,d\zeta)
\]
where $D_t$ and $D_{t,\zeta}$ denote the generalised Malliavin derivatives at $t$ with respect to $B$ and at $t,\zeta$ with respect to $N$, respectively. Applying this to $F := U'\bigl(X_{\phi}(T\wedge\tau_n)\bigr)$ we see that the solution $(p^n,q^n,r^n)$ to the BSDE \eqref{BSDE} can be represented as follows, for all $t\in[0,T]$ and $\zeta\in\R$:
\begin{align*}
p^n(t) &=  E\bigl[U'\bigl(X_\varphi(T\wedge\tau_n)\bigr) | \G_t\bigr], \\
q^n(t)  &= E\bigl[D_t U'\bigl(X_\varphi(T\wedge\tau_n)\bigr) | \G_t\bigr],	\\
r^n(t,\zeta) &= E\bigl[D_{t,\zeta} U'\bigl(X_\varphi(T\wedge\tau_n)\bigr) | \G_t\bigr].
\end{align*}

Therefore, in view of Proposition \ref{BSDE-opt}, we get the following characterization of the optimal terminal wealth $X_{\phi^{*,n}}(T\wedge\tau_n)$ of the partial information locally optimal portfolio problem (Problem \ref{utility-loc}).

\begin{Cor}
For any fixed $n\in\N$, an element $\phi\in\A_{\FF}^U(n)$ solves Problem \ref{utility-loc} if and only if the corresponding terminal wealth $X_{\phi}(T\wedge\tau_n)$ satisfies the following partial information Malliavin differential equation $P$-a.s. for a.a. $t\in[0,T\wedge\tau_n]$:
\[
E\biggl[b^n(t)\,U'\bigl(X_{\phi}(T\wedge\tau_n)\bigr)
+\sigma^n(t)\,D_tU'\bigl(X_{\phi}(T\wedge\tau_n)\bigr)
+\int_\R\!\gamma^n(t,\zeta)\,D_{t,\zeta}U'\bigl(X_{\phi}(T\wedge\tau_n)\bigr)\nu(d\zeta)\Bigr|\F_t\biggr] = 0.
\]
\end{Cor}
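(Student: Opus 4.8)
The plan is to derive the Corollary directly from Proposition \ref{BSDE-opt} by substituting the Clark--Ocone representations of the adjoint triple $(p^n,q^n,r^n)$ displayed immediately above the statement and then invoking the tower property for conditional expectations. No new estimates are needed: everything reduces to a careful manipulation of nested conditional expectations.

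First I would recall that, by Proposition \ref{BSDE-opt}, the strategy $\phi\in\A_{\FF}^U(n)$ solves Problem \ref{utility-loc} if and only if condition \eqref{BSDE-cond} holds $P$-a.s.\ for a.a.\ $t\in[0,T\wedge\tau_n]$, i.e.\ the $\F_t$-conditional expectation of $b^n(t)\,p^n(t)+\sigma^n(t)\,q^n(t)+\int_\R\gamma^n(t,\zeta)\,r^n(t,\zeta)\,\nu(d\zeta)$ vanishes. Into this I would insert the three identities $p^n(t)=E[U'(X_\phi(T\wedge\tau_n))\,|\,\G_t]$, $q^n(t)=E[D_tU'(X_\phi(T\wedge\tau_n))\,|\,\G_t]$ and $r^n(t,\zeta)=E[D_{t,\zeta}U'(X_\phi(T\wedge\tau_n))\,|\,\G_t]$, which turns the integrand into a sum of $\GG$-predictable coefficients multiplied by $\G_t$-conditional expectations.

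The key observation is that, being $\cP_{\GG}$-measurable, the values $b^n(t)$, $\sigma^n(t)$ and $\gamma^n(t,\zeta)$ are $\G_{t-}$-measurable, hence $\G_t$-measurable, for a.a.\ $t$; by ``taking out what is known'' I can therefore pull each coefficient inside the corresponding inner $\G_t$-conditional expectation. For the jump term I would first absorb $\gamma^n(t,\zeta)$ into the conditional expectation for each fixed $\zeta$ and then interchange the $\nu$-integration with the $\G_t$-conditional expectation by a Fubini-type argument. After these steps, and using linearity of conditional expectation, the quantity inside the outer $\F_t$-conditional expectation becomes the single $\G_t$-conditional expectation of $b^n(t)\,U'(X_\phi(T\wedge\tau_n))+\sigma^n(t)\,D_tU'(X_\phi(T\wedge\tau_n))+\int_\R\gamma^n(t,\zeta)\,D_{t,\zeta}U'(X_\phi(T\wedge\tau_n))\,\nu(d\zeta)$.

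Finally, since $\F_t\subseteq\G_t$, the tower property $E[E[\,\cdot\,|\,\G_t]\,|\,\F_t]=E[\,\cdot\,|\,\F_t]$ collapses the two nested conditional expectations and delivers exactly the claimed partial information Malliavin differential equation; since every step is an equivalence, reading the chain in reverse gives the converse implication. I expect the only delicate points to be the measurability bookkeeping for the predictable coefficients at a fixed time and the justification of the Fubini interchange in the jump term, both of which are routine: the former is the standard fact that $\cP_{\GG}$-measurable processes are $\G_{t-}$-measurable at each $t$, and the latter is licensed by the integrability $r^n\in G^2(\tN)$ together with the $\FF$-local boundedness (hence, after stopping, uniform boundedness) of $\gamma^n$.
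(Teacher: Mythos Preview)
Your proposal is correct and follows essentially the same approach as the paper: substitute the Clark--Ocone representations of $(p^n,q^n,r^n)$ displayed just before the Corollary into condition \eqref{BSDE-cond} of Proposition \ref{BSDE-opt}. The paper states this substitution tersely (``Therefore, in view of Proposition \ref{BSDE-opt}\ldots''), whereas you spell out the measurability bookkeeping, the Fubini interchange, and the tower property explicitly; these are precisely the implicit steps behind the paper's one-line argument.
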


\subsection{Partial information equivalent martingale measures (PIEMMs)}	\label{S3.2}

We now proceed to characterize the density processes of all \emph{partial information equivalent martingale measures} (PIEMMs), defined below in Definition \ref{PIEMM-loc}. As a preliminary, let us consider a generic probability measure $Q\sim P$ on $(\Omega,\G_T)$ and denote by $G=\bigl\{G(t);t\in[0,T]\bigr\}$ its density process, i.e., $G(t):=\frac{dQ|_{\G_t}}{dP|_{\G_t}}$ for all $t\in[0,T]$. It is well-known that $G$ is a strictly positive $\GG$-martingale with $E[G(T)]=1$. 
Observe that the density process $G$ has been defined with respect to the full information filtration $\GG$, not to the partial information filtration $\FF$. The reason is that, in the current general setting, we do not assume any specific structure for $\FF$, and, hence, it would not be possible to provide any useful characterization of the density processes in $\FF$ \footnote{We want to point out that, for the same reason, the BSDE \eqref{BSDE} has been formulated with respect to the full information filtration $\GG$, even though the terminal condition $U'\bigl(X_{\phi}(T\wedge\tau_n)\bigr)$ is clearly $\F_T$-measurable.}. 
On the contrary, due to the martingale representation property in the filtration $\GG$, there exists a $\cP_{\GG}$-measurable process $\theta_0=\bigl\{\theta_0(t);t\in[0,T]\bigr\}$ with $\int_0^T\!\theta_0^2(t)dt<\infty$ $P$-a.s. and a $\cP_{\GG}\otimes\B(\R)$-measurable function $\theta_1:\Omega\times[0,T]\times\R\rightarrow(-1,\infty)$ with $\int_0^T\!\int_\R \theta_1^2(t,\zeta)\nu(d\zeta)dt<\infty$ $P$-a.s. such that the following holds:
\be	\label{density}	\left\{	\ba
dG(t) &= G(t-)\left(\theta_0(t)\,dB(t)+\int_\R\!\theta_1(t,\zeta)\,\tN(dt,d\zeta)\right), 
\quad t\in[0,T];	\\
G(0) &= 1.
\ea	\right.	\ee
For all $t\in[0,T]$, the SDE \eqref{density} admits as explicit solution
\be	\label{density-2}	\ba
G(t) &= \exp\left(\int_0^t\!\theta_0(s)\,dB(s)-\frac{1}{2}\int_0^t\!\theta_0^2(s)\,ds
+\int_0^t\!\!\int_\R\!\log\bigl(1+\theta_1(s,\zeta)\bigr)\,\tN(ds,d\zeta)\right.	\\
&\phantom{=\exp=}\left.
+\int_0^t\!\!\int_\R\!\bigl\{\log\bigl(1+\theta_1(s,\zeta)\bigr)-\theta_1(s,\zeta)\bigr\}\nu(d\zeta)\,ds\right).
\ea	\ee
In the following, we write $G_{\theta}(t):=G(t)$, for $\theta:=(\theta_0,\theta_1)$, where $G(t)$ is represented by $\theta$ as above. We let $\Theta$ denote the family of all $\cP_{\GG}$-measurable processes $\theta=(\theta_0,\theta_1)$ such that the SDE \eqref{density} has a unique strictly positive martingale solution $G_{\theta}=\bigl\{G_{\theta}(t);t\in[0,T]\bigr\}$. Similarly, for $\theta\in\Theta$, we denote by $Q_{\theta}$ the measure on $(\Omega,\G_T)$ defined by $dQ_{\theta}/dP:=G_{\theta}(T)$ and by $E_{Q_\theta}[\cdot]$ the corresponding expectation.

\begin{Def}	\label{PIEMM-loc}
For a fixed $n\in\N$, a probability measure $Q_{\theta}\sim P$ on $(\Omega,\G_T)$ is said to be a \emph{partial information equivalent martingale measure (PIEMM) up to $\tau_n$} if the process $S^{\tau_n}$ is a $(Q_{\theta},\FF)$-martingale\footnote{If $\tau_n$ is a $\GG$-stopping time, but not necessarily an $\FF$-stopping time, and $\FF^S\subseteq\FF$ does not necessarily hold, a PIEMM up to $\tau_n$ can be defined in terms of the $(Q_{\theta},\FF)$-martingale property of the $(Q_{\theta},\FF)$-optional projection of the stopped process $S^{\tau_n}$ (compare with Remark \ref{gen-case}). In that case, analogous versions of Proposition \ref{PIEMM} and Theorem \ref{viability-PIEMM-loc} can be established.}.
\end{Def}

In the next proposition we characterize the density processes of all PIEMMs.

\begin{Prop}	\label{PIEMM}
For any fixed $n\in\N$, a probability measure $Q_{\theta}\sim P$ on $(\Omega,\G_T)$ such that $(\theta_0,\theta_1)\in L^2(B)\times G^2(\tN)$ is a PIEMM up to $\tau_n$ if and only if the following condition holds:
\be	\label{PIEMM-cond}
E_{Q_{\theta}}\!\left[b^n(t)+\sigma^n(t)\,\theta_0(t\wedge\tau_n)
+\int_\R\!\gamma^n(t,\zeta)\,\theta_1(t\wedge\tau_n,\zeta)\nu(d\zeta)\Bigr|\F_t\right] = 0
\qquad\text{$P$-a.s. for a.a. }t\in[0,T\wedge\tau_n].
\ee
\end{Prop}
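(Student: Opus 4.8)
The plan is to characterize the $(Q_\theta,\FF)$-martingale property of $S^{\tau_n}$ by first expressing the $\GG$-dynamics of $S^{\tau_n}$ under the measure $Q_\theta$ and then projecting onto $\FF$. The key tool is Girsanov's theorem for jump-diffusions: given that the density process $G_\theta$ satisfies \eqref{density} with $(\theta_0,\theta_1)\in L^2(B)\times G^2(\tN)$, under $Q_\theta$ the process $B^{Q_\theta}(t):=B(t)-\int_0^t\theta_0(s)\,ds$ is a $(Q_\theta,\GG)$-Brownian motion and the $(Q_\theta,\GG)$-compensator of $N(dt,d\zeta)$ becomes $(1+\theta_1(t,\zeta))\nu(d\zeta)\,dt$. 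Substituting these into the SDE \eqref{S} for $S$ and stopping at $\tau_n$, I would collect the resulting drift term of $S^{\tau_n}$ under $Q_\theta$, which is precisely
\be
\left(b^n(t)+\sigma^n(t)\,\theta_0(t\wedge\tau_n)+\int_\R\gamma^n(t,\zeta)\,\theta_1(t\wedge\tau_n,\zeta)\nu(d\zeta)\right)dt =: \beta^n_\theta(t)\,dt,
\ee
so that $S^{\tau_n}$ admits the $(Q_\theta,\GG)$-semimartingale decomposition with drift $\beta^n_\theta$ and a $(Q_\theta,\GG)$-local-martingale part driven by $dB^{Q_\theta}$ and the compensated measure $\tN^{Q_\theta}$.

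Next I would pass from the full filtration $\GG$ to the partial information filtration $\FF$. The crucial observation is that, because $\FF^S\subseteq\FF$ by part (ii) of Assumption \ref{lcl-bdd}, the stopped price $S^{\tau_n}$ is $\FF$-adapted, and by the first part of Remark after Assumption \ref{lcl-bdd} (via Theorem 3.1 of \cite{Str}) it is an $\FF$-semimartingale. I would argue that $S^{\tau_n}$ is a $(Q_\theta,\FF)$-martingale if and only if its $\FF$-predictable drift vanishes. The $\FF$-drift is obtained by taking $\FF$-optional (equivalently, predictable) projections of the $\GG$-drift $\beta^n_\theta$ under $Q_\theta$; since the local-boundedness from Assumption \ref{lcl-bdd}(i) together with $(\theta_0,\theta_1)\in L^2(B)\times G^2(\tN)$ guarantees the requisite integrability, the projection is well-defined and the $\FF$-drift equals $E_{Q_\theta}[\beta^n_\theta(t)\mid\F_t]$ for a.a.\ $t$. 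Setting this to zero yields exactly condition \eqref{PIEMM-cond}, giving both implications simultaneously.

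The main obstacle I anticipate is the rigorous justification that the $(Q_\theta,\FF)$-martingale property of the $\FF$-adapted semimartingale $S^{\tau_n}$ is equivalent to the vanishing of the $Q_\theta$-conditional expectation of the $\GG$-drift given $\F_t$. One subtlety is that optional projection under $P$ and under $Q_\theta$ differ, so I must take the projection with respect to $Q_\theta$ (hence the use of $E_{Q_\theta}[\,\cdot\mid\F_t]$ rather than $E[\,\cdot\mid\F_t]$ in \eqref{PIEMM-cond}); care is needed to confirm that the $\FF$-local-martingale part inherited from projecting the $\GG$-local-martingale part carries no hidden drift. A clean way to handle this is to integrate against bounded $\FF$-predictable test integrands $\psi$ and use the tower property: $S^{\tau_n}$ is a $(Q_\theta,\FF)$-martingale iff $E_{Q_\theta}[\int_0^T\psi(t)\,dS^{\tau_n}(t)]=0$ for all such $\psi$, and since $\psi$ is $\FF$-predictable the stochastic-integral-against-martingale-part term vanishes in $Q_\theta$-expectation, leaving $E_{Q_\theta}[\int_0^T\psi(t)\beta^n_\theta(t)\,dt]=E_{Q_\theta}[\int_0^T\psi(t)E_{Q_\theta}[\beta^n_\theta(t)\mid\F_t]\,dt]$, which vanishes for all $\psi$ precisely when \eqref{PIEMM-cond} holds.
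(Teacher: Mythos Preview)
Your approach is correct and reaches the same conclusion, but it is organized differently from the paper's proof. The paper never changes measure: it stays under $P$, applies Bayes' rule to rewrite the $(Q_\theta,\FF)$-martingale property of $S^{\tau_n}$ as the vanishing of $E\bigl[G_\theta(t)S(t\wedge\tau_n)-G_\theta(s)S(s\wedge\tau_n)\,\big|\,\F_s\bigr]$, then computes the drift of the product $G_\theta S^{\tau_n}$ via integration by parts, verifies (using boundedness of $S^{\tau_n},b^n,\sigma^n,\gamma^n$ together with $(\theta_0,\theta_1)\in L^2(B)\times G^2(\tN)$) that the $\GG$-local-martingale part is a true $\GG$-martingale, and finally conditions on $\F_s$ through $\G_s$ via the tower property. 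Your route---apply Girsanov to obtain the $(Q_\theta,\GG)$-drift $\beta^n_\theta$ of $S^{\tau_n}$, then project onto $\FF$ by integrating against bounded $\FF$-predictable test processes---is the measure-changed mirror image of this computation: the integration-by-parts identity for $G_\theta S^{\tau_n}$ under $P$ and the Girsanov decomposition of $S^{\tau_n}$ under $Q_\theta$ encode exactly the same drift. What the paper's route buys is that all integrability checks are done under $P$ in the spaces $L^2(B)$ and $G^2(\tN)$ already assumed, with no need to control moments under the new measure; what your route buys is a cleaner conceptual picture (drift under $Q_\theta$, then $\FF$-projection) and a direct identification of the left-hand side of \eqref{PIEMM-cond} as the $\FF$-optional projection of the $Q_\theta$-drift. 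The one place where you should be explicit is the step ``the stochastic-integral-against-martingale-part term vanishes in $Q_\theta$-expectation'': this requires upgrading the $(Q_\theta,\GG)$-local martingale $\int\psi\,dM^{Q_\theta}$ to a true martingale, which is precisely the point where the paper invokes boundedness of the stopped coefficients and the $L^2$ assumption on $(\theta_0,\theta_1)$; translated to your setting, the cleanest justification is to pull the computation back to $P$ via Bayes' rule, which is exactly the paper's argument.
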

\begin{proof}
See the Appendix.
\end{proof}

\subsection{Local market viability and PIEMMs}	\label{S3.3}

We now combine the results of Sections \ref{S3.1}-\ref{S3.2} to obtain our first main result, namely a characterization of local market viability in terms of partial information equivalent martingale measures.

\begin{Thm}	\label{viability-PIEMM-loc}
For any fixed $n\in\N$, the following are equivalent:
\begin{myenumerate}
\item 
the financial market is locally viable up to $\tau_n$ (in the sense of Definition \ref{viability-loc}) and $\phi^*\in\A_{\FF}^U(n)$ solves the partial information locally optimal portfolio problem (Problem \ref{utility-loc});
\item
for $\phi^*\in\A_{\FF}^U(n)$, the measure $Q^{\phi^*,n}\sim P$ defined on $(\Omega,\G_T)$ by
\be	\label{thm-1}
\frac{dQ^{\phi^*,n}}{dP} := \frac{U'\bigl(X_{\phi^*}(T\wedge\tau_n)\bigr)}
{E\bigl[U'\bigl(X_{\phi^*}(T\wedge\tau_n)\bigr)\bigr]}
\ee
is a PIEMM up to $\tau_n$, in the sense of Definition \ref{PIEMM-loc}.
\end{myenumerate}
\end{Thm}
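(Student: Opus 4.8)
The plan is to bridge the two characterizing conditions already established in Sections \ref{S3.1} and \ref{S3.2}: by Proposition \ref{BSDE-opt}, statement (i) is equivalent to condition \eqref{BSDE-cond} for the adjoint triple $(p^n,q^n,r^n)$ associated to $\phi^*$, while by Proposition \ref{PIEMM} statement (ii) is equivalent to condition \eqref{PIEMM-cond} for the measure $Q^{\phi^*,n}$. Hence it suffices to show that \eqref{BSDE-cond} and \eqref{PIEMM-cond} are, in fact, one and the same condition, once we correctly identify the density process of $Q^{\phi^*,n}$ with the (normalised) adjoint martingale $p^n$.

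First I would compute the density process of $Q^{\phi^*,n}$. Since $U$ is strictly increasing, $U'\bigl(X_{\phi^*}(T\wedge\tau_n)\bigr)>0$ $P$-a.s., so, setting $c:=E\bigl[U'\bigl(X_{\phi^*}(T\wedge\tau_n)\bigr)\bigr]>0$, the random variable in \eqref{thm-1} is a well-defined strictly positive density and $Q^{\phi^*,n}\sim P$. By the Clark--Ocone representation recalled after Proposition \ref{BSDE-opt} we have $p^n(t)=E\bigl[U'\bigl(X_{\phi^*}(T\wedge\tau_n)\bigr)|\G_t\bigr]$, so the $\GG$-martingale $G(t):=E\bigl[dQ^{\phi^*,n}/dP\,|\,\G_t\bigr]=p^n(t)/c$ is exactly the normalised adjoint process, and it is strictly positive because $p^n(T)>0$. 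Matching the dynamics $dp^n=q^n\,dB+\int_\R r^n\,\tN$ from \eqref{BSDE} against the density SDE \eqref{density} then identifies the Girsanov pair as $\theta_0(t)=q^n(t)/p^n(t-)$ and $\theta_1(t,\zeta)=r^n(t,\zeta)/p^n(t-)$, so that $Q^{\phi^*,n}=Q_\theta$ for this $\theta$.

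Next I would translate \eqref{PIEMM-cond} back to the reference measure $P$. Writing $\Psi(t)$ for the bracketed integrand in \eqref{PIEMM-cond} (which is $\G_t$-measurable and, on $[0,T\wedge\tau_n]$, satisfies $\theta_i(t\wedge\tau_n)=\theta_i(t)$), the abstract Bayes formula together with the $\GG$-martingale property of $G$ and the tower property over $\G_t\supseteq\F_t$ yields $E_{Q_\theta}\bigl[\Psi(t)|\F_t\bigr]=E_P\bigl[\Psi(t)\,G(t)|\F_t\bigr]/E_P\bigl[G(t)|\F_t\bigr]$; since the denominator is strictly positive, \eqref{PIEMM-cond} is equivalent to $E_P\bigl[\Psi(t)\,p^n(t)|\F_t\bigr]=0$. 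Expanding the product and using that $p^n(t)=p^n(t-)$ for Lebesgue-a.a.\ $t$ (the jump times being at most countable), the cross terms collapse to $\theta_0(t)\,p^n(t)=q^n(t)$ and $\theta_1(t,\zeta)\,p^n(t)=r^n(t,\zeta)$ for a.a.\ $t$, whence $E_P\bigl[\Psi(t)p^n(t)|\F_t\bigr]=0$ is precisely \eqref{BSDE-cond}. Chaining these equivalences with Propositions \ref{BSDE-opt} and \ref{PIEMM} closes the loop between (i) and (ii).

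The main obstacle is the integrability needed to invoke Proposition \ref{PIEMM}, namely $(\theta_0,\theta_1)\in L^2(B)\times G^2(\tN)$: although Lemma \ref{BSDE-sol} gives $(q^n,r^n)\in L^2(B)\times G^2(\tN)$, the quotients $q^n/p^n(\cdot-)$ and $r^n/p^n(\cdot-)$ need not inherit square-integrability when $p^n$ approaches $0$. I would handle this either by a \emph{direct} verification that the bounded process $S^{\tau_n}$ is a $(Q^{\phi^*,n},\FF)$-martingale --- re-running the Girsanov-and-$\FF$-projection argument of Proposition \ref{PIEMM} for this specific strictly positive density $G=p^n/c$, for which the relevant integrals are expressed through the square-integrable $q^n,r^n,p^n$ rather than through $\theta$ --- or by a further localisation along stopping times on which $p^n$ is bounded away from $0$, passing to the limit using the uniform boundedness of $b^n,\sigma^n,\gamma^n$ from part (i) of Assumption \ref{lcl-bdd}. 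The remaining measurability and integrability checks legitimising the Bayes manipulation are routine given this boundedness.
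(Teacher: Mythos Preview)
Your proposal is correct and takes essentially the same approach as the paper: identify the density process as $G=p^n/p^n(0)$, set $\theta_0=q^n/p^n$ and $\theta_1=r^n/p^n(\cdot-)$, and observe that substituting these relations turns condition \eqref{BSDE-cond} into precisely \eqref{PIEMM-cond} (via Bayes' rule), so that Propositions \ref{BSDE-opt} and \ref{PIEMM} close the equivalence. The integrability obstacle you flag for $(\theta_0,\theta_1)\in L^2(B)\times G^2(\tN)$ is in fact glossed over in the paper's own invocation of Proposition \ref{PIEMM}, and your proposed remedy---re-running the Girsanov/$\FF$-projection computation of Proposition \ref{PIEMM} directly in terms of the square-integrable $p^n,q^n,r^n$ rather than $\theta$---is the natural way to close it.
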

\begin{proof}
{\bf (i) $\Rightarrow$ (ii)}:
Due to Proposition \ref{BSDE-opt}, if $\phi\in\A_{\FF}^U(n)$ solves Problem \ref{utility-loc}, then the unique solution $(p^n,q^n,r^n)\in\M^2\times L^2(B)\times G^2(\tN)$ to the BSDE \eqref{BSDE} satisfies condition \eqref{BSDE-cond}. Let us define, for all $t\in[0,T]$ and $\zeta\in\R$
\be	\label{def-proc}
G(t) := \frac{p^n(t)}{p^n(0)} 
= \frac{E\bigl[U'\bigl(X_{\phi}(T\wedge\tau_n)\bigr)|\G_t\bigr]}
{E\bigl[U'\bigl(X_{\phi}(T\wedge\tau_n)\bigr)\bigr]},
\qquad
\theta_0(t) := \frac{q^n(t)}{p^n(t)},
\qquad
\theta_1(t,\zeta) := \frac{r^n(t,\zeta)}{p^n(t-)}.
\ee
Then, by combining equation \eqref{BSDE} with \eqref{def-proc}, we get
\[	\ba
dG(t) &= \frac{dp^n(t)}{p^n(0)} 
= \frac{q^n(t)}{p^n(0)}\,dB(t)+\int_\R\!\frac{r^n(t,\zeta)}{p^n(0)}\,\tN(dt,d\zeta) 
= \frac{p^n(t)}{p^n(0)}\,\theta_0(t)\,dB(t)+\frac{p^n(t-)}{p^n(0)}\int_\R\!\theta_1(t,\zeta)\tN(dt,d\zeta) \\
&= G(t-)\left[\theta_0(t)\,dB(t)+\int_\R\!\theta_1(t,\zeta)\,\tN(dt,d\zeta)\right].
\ea	\]
By letting $dQ^{\phi,n}/dP:=G(T)$, we get a well-defined probability measure $Q^{\phi,n}\sim P$ on $(\Omega,\G_T)$ with density given by the right-hand side of \eqref{thm-1}. In view of Proposition \ref{PIEMM}, in order to show that $Q^{\phi,n}$ is a PIEMM up to $\tau_n$ it suffices to show that condition \eqref{PIEMM-cond} holds. This follows immediately by substituting \eqref{def-proc} into condition \eqref{BSDE-cond}.\\
\noindent {\bf (ii) $\Rightarrow$ (i)}:
Suppose that the probability measure defined by the right-hand side of \eqref{thm-1} is a PIEMM up to $\tau_n$, for some $\phi\in\A_{\FF}^U(n)$, and define the process $G=\bigl\{G(t);t\in[0,T]\bigr\}$ as follows:
\[
G(t) := \frac{E\bigl[U'\bigl(X_{\phi}(T\wedge\tau_n)\bigr)|\G_t\bigr]}
{E\bigl[U'\bigl(X_{\phi}(T\wedge\tau_n)\bigr)\bigr]},
\qquad \text{ for all }t\in[0,T].
\]
By the martingale representation property, the process $G$ admits a representation of the form \eqref{density}, for some $\cP_{\GG}$-measurable process $\theta_0$ and for some $\cP_{\GG}\otimes\B(\R)$-measurable function $\theta_1$. Furthermore, since $Q^{\phi,n}$ is a PIEMM up to $\tau_n$, Proposition \ref{PIEMM} implies that the following condition is satisfied $P$-a.s. for a.a. $t\in[0,T\wedge\tau_n]$:
\be	\label{thm-3}
E\left[G(t)\left(b^n(t)+\sigma^n(t)\,\theta_0(t\wedge\tau_n)
+\int_\R\!\gamma^n(t,\zeta)\,\theta_1(t\wedge\tau_n,\zeta)\nu(d\zeta)\right)\Bigr|\F_t\right] = 0\,.
\ee
Let us then define, for all $t\in[0,T]$ and $\zeta\in\R$
\be	\label{thm-4}
p^n(t) := E\bigl[U'\bigl(X_{\phi}(T\wedge\tau_n)\bigr)\bigr]G(t),
\qquad
q^n(t) := p^n(t)\,\theta_0(t\wedge\tau_n),
\qquad
r^n(t,\zeta) := p^n(t-)\theta_1(t\wedge\tau_n,\zeta).
\ee
Note that, since $U'\bigl(X_{\phi}(T\wedge\tau_n)\bigr)\in L^2(P)$, we have $(p^n,q^n,r^n)\in\M^2\times L^2(B)\times G^2(\tN)$. By substituting \eqref{thm-4} into \eqref{density}, we see that $(p^n,q^n,r^n)$ satisfies the BSDE \eqref{BSDE}. Moreover, by substituting \eqref{thm-4} into \eqref{thm-3}, we can verify that \eqref{BSDE-cond} holds. Proposition \ref{BSDE-opt} allows then to conclude that $\phi\in\A_{\FF}^U(n)$ solves Problem \ref{utility-loc}.
\end{proof}

\begin{Rem}	\label{duality}
As we have already mentioned, we did not a-priori introduce any no-arbitrage restriction on the model. The result of Theorem \ref{viability-PIEMM-loc} can then be interpreted in the following sense: as soon as the financial market is locally viable, in the sense that a portfolio optimization problem admits locally a solution, then there locally exists a partial information equivalent martingale measure. This means that the absence of arbitrage opportunities comes as a direct consequence of local market viability.
\end{Rem}

\section{Global market viability under partial information}	\label{S4}

So far, we have studied the viability of the financial market in a local sense, namely up to the stopping times composing the localizing sequence $\{\tau_n\}_{n\in\N}$. Now, we adopt a global perspective and aim at characterizing the global viability of the financial market on the whole investment horizon $[0,T]$. In Section \ref{S4.1}, we directly prove a global version of Theorem \ref{viability-PIEMM-loc} under the stronger assumption that $b$, $\sigma$ and $\gamma$ in \eqref{S} are bounded (and not only locally bounded) and the measure $\nu$ is finite. In Section \ref{S4.2}, we shall deal with the more delicate locally bounded case. 

\subsection{The case of bounded coefficients}	\label{S4.1}

This subsection aims at proving, under a rather strong assumption (see Assumption \ref{ass-easy} below), the equivalence between the existence of a PIEMM and the viability of the financial market in a global sense. In the spirit of Definition \ref{viability-loc}, we adopt the following definition of global market viability.

\begin{Def}	\label{viability-glob-easy}
Let $U$ be a utility function. The financial market is said to be \emph{globally viable} if there exists an element $\phi^*\in\A^U_{\FF}$ such that
\be	\label{utility-max}
\underset{\phi\in\A^U_{\FF}}{\sup}E\Bigl[U\bigl(X_{\phi}(T)\bigr)\Bigr]
= E\Bigl[U\bigl(X_{\phi^*}(T)\bigr)\Bigr]<\infty.
\ee
\end{Def}

In general, it turns out that the equivalence between the global viability of the financial market (in the sense of Definition \ref{viability-glob-easy}) and the existence of a PIEMM with density given by the (normalised) marginal utility of the optimal terminal wealth does not hold, as shown by an explicit counterexample in Section \ref{S5}. However, we can still obtain a direct and global version of Theorem \ref{viability-PIEMM-loc} if the following stronger assumption is satisfied.

\begin{Ass}	\label{ass-easy}
\begin{myenumerate}
\item
The $\cP_{\GG}$-measurable processes $b$, $\sigma$ as well as the $\cP_{\GG}\otimes\B(\R)$-measurable function $\gamma$ are $P$-a.s. uniformly bounded and the measure $\nu$ is finite;
\item
the utility function $U$ satisfies the following condition:
\be	\label{utility-cond}
E\Bigl[U'\bigl(X_{\phi}(T)+\xi\bigr)^2\Bigr]<\infty,
\qquad\text{for all }\phi\in\A^U_{\FF}\text{ and for all }\xi\in\!\!\bigcap_{r\in(1,\infty)}\!\!\!L^r(P).
\ee
\end{myenumerate}
\end{Ass}

A useful consequence of part (i) of Assumption \ref{ass-easy} is that the price process $S$ admits finite moments of every order, as shown in the next simple lemma, the proof of which is postponed to the Appendix.

\begin{Lem}	\label{moments}
If Assumption \ref{ass-easy}-(i) holds, then $E\bigl[\,\sup_{t\in[0,T]}|S(t)|^r\bigr]<\infty$ for all $r\in(1,\infty)$.
\end{Lem}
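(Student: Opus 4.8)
The plan is to write the solution of the SDE \eqref{S} in integral form,
\[
S(t)=S_0+\int_0^t b(s)\,ds+\int_0^t\sigma(s)\,dB(s)+\int_0^t\!\!\int_\R\gamma(s,\zeta)\,\tN(ds,d\zeta),
\]
decompose the supremum accordingly, and bound each of the three integral contributions separately. Using the elementary convexity inequality $\bigl(\sum_{i=1}^4 a_i\bigr)^r\le 4^{r-1}\sum_{i=1}^4 a_i^r$ (valid for nonnegative reals and $r\ge1$, hence for our range $r\in(1,\infty)$), it will suffice to show that the drift integral $I_b(t):=\int_0^t b(s)\,ds$, the continuous martingale integral $I_\sigma(t):=\int_0^t\sigma(s)\,dB(s)$ and the jump integral $I_\gamma(t):=\int_0^t\!\int_\R\gamma(s,\zeta)\,\tN(ds,d\zeta)$ each have a finite $r$-th moment of their supremum over $[0,T]$, since $|S_0|$ is a constant. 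Throughout I would write $K$ for a common uniform bound of $|b|,|\sigma|,|\gamma|$, which exists by Assumption \ref{ass-easy}-(i), and set $\lambda:=\nu(\R)<\infty$.

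The drift term is handled pathwise: $\sup_{t\in[0,T]}|I_b(t)|\le KT$, so its $r$-th moment is trivially finite. For the continuous martingale part, $I_\sigma$ is a square-integrable $\GG$-martingale, so the Burkholder--Davis--Gundy inequality would give
\[
E\Bigl[\sup_{t\in[0,T]}|I_\sigma(t)|^r\Bigr]\le C_r\,E\Bigl[\bigl(\textstyle\int_0^T\sigma^2(s)\,ds\bigr)^{r/2}\Bigr]\le C_r\,(K^2T)^{r/2}<\infty,
\]
using that $\langle I_\sigma\rangle_T=\int_0^T\sigma^2(s)\,ds\le K^2T$ is bounded.

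The jump term is the only step that is not completely routine, and it is where the finiteness of $\nu$ is decisive. Splitting the compensated integral,
\[
I_\gamma(t)=\int_0^t\!\!\int_\R\gamma(s,\zeta)\,N(ds,d\zeta)-\int_0^t\!\!\int_\R\gamma(s,\zeta)\,\nu(d\zeta)\,ds,
\]
the compensator integral is bounded pathwise by $K\lambda T$, while the integral against $N$ is bounded pathwise by $K\,N([0,T]\times\R)$. Since $\nu$ is finite, $N([0,T]\times\R)$ is a Poisson random variable with parameter $\lambda T$, which has finite moments of every order; hence $E[\sup_{t\in[0,T]}|I_\gamma(t)|^r]<\infty$. (Equivalently, one could apply Burkholder--Davis--Gundy to $[I_\gamma]_T=\int_0^T\!\int_\R\gamma^2(s,\zeta)\,N(ds,d\zeta)\le K^2 N([0,T]\times\R)$, reaching the same conclusion.) Combining the three estimates through the convexity inequality above would then yield the claim.

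I do not expect a genuine obstacle here, but the point deserving care, and the reason Assumption \ref{ass-easy}-(i) requires $\nu$ to be finite rather than merely $\sigma$-finite, is precisely this pathwise control of the jump part: with $\nu$ finite, $S$ has $P$-a.s. only finitely many jumps on $[0,T]$, each of size at most $K$, so the jump contribution is dominated by $K$ times a Poisson count whose moments are all finite. Under a general $\sigma$-finite $\nu$ the pathwise bound by $N([0,T]\times\R)$ fails, and one would instead have to invoke genuine jump-martingale moment inequalities (e.g. Kunita's estimates) together with integrability conditions on $\gamma$ going beyond mere boundedness.
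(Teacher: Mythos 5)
Your proof is correct and follows essentially the same route as the paper, which also splits $S$ via \eqref{S} and applies the Burkholder--Davis--Gundy inequality, bounding the jump contribution through $\int_0^T\!\int_\R\gamma^2(u,\zeta)N(du,d\zeta)\le K^2 N([0,T]\times\R)$ and the finiteness of $\nu$. The only cosmetic differences are that the paper combines the terms with Minkowski's inequality rather than the convexity bound, and treats the jump part directly by BDG rather than by your (equivalent) pathwise Poisson-count estimate.
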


Following the same approach of Section \ref{S3.1}, we can characterize the solution to the portfolio optimization problem \eqref{utility-max} via the solution $(p,q,r)\in\M^2\times L^2(B)\times G^2(\tN)$ to the associated BSDE
\be	\label{BSDE-glob}	\left\{	\ba
dp(t) &= q(t)\,dB(t)+\int_\R\!r(t,\zeta)\,\tN(dt,d\zeta),	\quad t\in[0,T];\\
p(T) &= U'\bigl(X_{\phi}(T)\bigr).
\ea	\right.	\ee

\begin{Prop}	\label{BSDE-opt-easy}
Suppose that Assumption \ref{ass-easy} holds. An element $\phi\in\A^U_{\FF}$ solves problem \eqref{utility-max} if and only if the solution $(p,q,r)\in\M^2\times L^2(B)\times G^2(\tN)$ to the corresponding BSDE \eqref{BSDE-glob} satisfies the following condition $P$-a.s. for a.a. $t\in[0,T]$:
\[
E\biggl[b(t)\,p(t) + \sigma(t)\,q(t)
+\int_\R\!\gamma(t,\zeta)\,r(t,\zeta)\,\nu(d\zeta)\Bigr|\F_t\biggr]=0
\]
\end{Prop}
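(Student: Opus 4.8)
The plan is to treat this as a direct, global application of the necessary and sufficient stochastic maximum principle under partial information (\cite{BO,OS2}), mirroring the argument behind Proposition \ref{BSDE-opt} but dispensing with the localization at $\tau_n$. The control problem has control $\phi\in\A^U_{\FF}$, state $X_{\phi}$ governed by \eqref{wealth}, objective \eqref{utility-max}, and adjoint BSDE \eqref{BSDE-glob} with terminal condition $p(T)=U'\bigl(X_{\phi}(T)\bigr)$. The associated Hamiltonian is linear in $\phi$, namely $H(\phi)=\phi\bigl(b(t)p(t)+\sigma(t)q(t)+\int_{\R}\gamma(t,\zeta)r(t,\zeta)\nu(d\zeta)\bigr)$, so that $\partial H/\partial\phi=b(t)p(t)+\sigma(t)q(t)+\int_{\R}\gamma(t,\zeta)r(t,\zeta)\nu(d\zeta)$ and the maximum-principle optimality condition $E[\partial H/\partial\phi\,|\,\F_t]=0$ is exactly the asserted identity. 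Thus the entire content of the proposition is the verification that the maximum principle applies globally, for which Assumption \ref{ass-easy} is tailored.

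For the necessity direction I would compute the G\^ateaux derivative of $J(\phi):=E\bigl[U\bigl(X_{\phi}(T)\bigr)\bigr]$ along admissible perturbations $\phi+a\beta$, with $\beta$ a bounded $\cP_{\FF}$-measurable direction. Because the wealth dynamics \eqref{wealth} are linear in $\phi$, the first variation is $Y(t)=\int_0^t\beta(s)\,dS(s)$, independent of $a$, and the derivative equals $E\bigl[U'\bigl(X_{\phi}(T)\bigr)Y(T)\bigr]=E[p(T)Y(T)]$. Applying the integration-by-parts formula to $p\,Y$ and using \eqref{BSDE-glob} rewrites this as
\[
E[p(T)Y(T)]=E\!\left[\int_0^T\!\beta(t)\Bigl(b(t)p(t)+\sigma(t)q(t)+\int_{\R}\gamma(t,\zeta)r(t,\zeta)\nu(d\zeta)\Bigr)dt\right],
\]
after which the $\F_t$-measurability of $\beta(t)$ and the tower property pull the bracket into a conditional expectation given $\F_t$. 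Optimality of $\phi$ forces this derivative to vanish for every admissible $\beta$, which by a standard selection of $\beta$ is equivalent to the conditional expectation being zero $P$-a.s. for a.a. $t$.

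For sufficiency I would use concavity of $U$: for any competitor $\psi\in\A^U_{\FF}$ one has $U\bigl(X_{\psi}(T)\bigr)-U\bigl(X_{\phi}(T)\bigr)\le p(T)\bigl(X_{\psi}(T)-X_{\phi}(T)\bigr)$, and the same integration by parts plus conditioning shows $E[p(T)(X_{\psi}(T)-X_{\phi}(T))]=E\bigl[\int_0^T(\psi(t)-\phi(t))\,E[\,\cdot\,|\F_t]\,dt\bigr]$, which vanishes once the stated condition holds. Hence $E[U(X_{\psi}(T))]\le E[U(X_{\phi}(T))]$ for all $\psi$, i.e.\ $\phi$ is optimal.

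The main obstacle, and the sole reason Assumption \ref{ass-easy} is imposed, is to legitimate these manipulations globally rather than on $[0,T\wedge\tau_n]$. Concretely, differentiation under the expectation requires a dominated-convergence bound on the difference quotient $a^{-1}\bigl(U(X_{\phi+a\beta}(T))-U(X_{\phi}(T))\bigr)$; writing it by the mean value theorem as $U'(\xi_a)\,Y(T)$ with $\xi_a$ between the two terminal wealths, the utility condition \eqref{utility-cond} controls $U'(\xi_a)$ in $L^2(P)$ uniformly in small $a$ (here $\xi_a-X_{\phi}(T)=aY(T)$ lies in every $L^r(P)$ by Lemma \ref{moments} and the boundedness of $b,\sigma,\gamma$), while Lemma \ref{moments} bounds $Y(T)$. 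The same ingredients, together with finiteness of $\nu$, guarantee $(p,q,r)\in\M^2\times L^2(B)\times G^2(\tN)$ and the validity of the integration by parts, so that all terms above are integrable and the global maximum principle goes through verbatim.
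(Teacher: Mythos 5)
Your proposal is correct and follows essentially the same route as the paper: the authors likewise verify that the perturbed strategies are admissible (conditions (A1)--(A2) of the maximum principle of \cite{BO}) using Assumption \ref{ass-easy} and Lemma \ref{moments}, and then rerun the concavity/integration-by-parts argument of Proposition \ref{BSDE-opt} globally, with the boundedness of $b$, $\sigma$, $\gamma$ replacing the localization at $\tau_n$. Your additional remarks on dominating the difference quotient via \eqref{utility-cond} and on the role of Lemma \ref{moments} are consistent with, and slightly more explicit than, what the paper records.
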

\begin{proof}
Due to Assumption \ref{ass-easy} together with Lemma \ref{moments}, the strategy $\bigl\{\beta(t);t\in[0,T]\bigr\}$ defined by $\beta(t):=\xi\ind_{[t_0,t_0+h]}(t)$ belongs to $\A_{\FF}^U$, for any $t_0\in[0,T]$, $h>0$, and for every bounded $\F_{t_0}$-measurable random variable $\xi$. Similarly, for every $\psi,\eta\in\A^U_{\FF}$ with $\eta$ bounded, we have $\psi+\delta\,\eta\in\A_{\FF}^U$ for any $\delta\in\R$. This shows that conditions (A1)-(A2) of the necessary maximum principle of \cite{BO} are satisfied. By relying on the boundedness of $b$, $\sigma$ and $\gamma$ as well as on Lemma \ref{BSDE-sol}, the same arguments used in the proof of Proposition \ref{BSDE-opt} allow then to prove the claim.
\end{proof}

Recall that, as in Section \ref{S3.2}, the density process $G=\bigl\{G(t);t\in[0,T]\bigr\}$ of a probability measure $Q\sim P$ on $(\Omega,\G_T)$ admits a representation of the form \eqref{density-2}, for some $\cP_{\GG}$-measurable process $\theta$ and for some $\cP_{\GG}\otimes\B(\R)$-measurable function $\theta_1:\Omega\times[0,T]\times\R\rightarrow(-1,\infty)$. Let us introduce the following definition, which is a natural extension of Definition \ref{PIEMM-loc} to the global investment horizon $[0,T]$.

\begin{Def}	\label{PIEMM-easy}
A probability measure $Q_{\theta}\sim P$ on $(\Omega,\G_T)$ is said to be a \emph{partial information equivalent martingale measure (PIEMM)} if the process $S$ is a $(Q_{\theta},\FF)$-martingale.
\end{Def}

Density processes $G_{\theta}$ of PIEMMs can be characterized as follows, similarly to Proposition \ref{PIEMM}.

\begin{Prop}	\label{PIEMM-dens}
Suppose that Assumption \ref{ass-easy}-(i) holds. A probability measure $Q_{\theta}\sim P$ on $(\Omega,\G_T)$ with $dQ_{\theta}/dP\in L^2(P)$ is a PIEMM if and only if the following condition holds:
\[
E_{Q_{\theta}}\!\left[b(t)+\sigma(t)\,\theta_0(t)
+\int_\R\!\gamma(t,\zeta)\,\theta_1(t,\zeta)\nu(d\zeta)\Bigr|\F_t\right] = 0
\qquad\text{$P$-a.s. for a.a. }t\in[0,T]
\]
where $\theta_0$ ($\theta_1$, resp.) is the process (predictable function, resp.) appearing in the representation \eqref{density-2}.
\end{Prop}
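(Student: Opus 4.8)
The plan is to mirror the proof of Proposition \ref{PIEMM}, but carried out on the whole interval $[0,T]$ rather than only up to $\tau_n$, exploiting the fact that Assumption \ref{ass-easy}-(i) together with Lemma \ref{moments} supplies enough global integrability to dispense with any stopping or localization. Two ingredients drive the argument: a Girsanov transformation, which identifies the $Q_\theta$-drift of $S$ in the full filtration $\GG$, and a projection onto the partial information filtration $\FF$, which reduces the $(Q_\theta,\FF)$-martingale property of $S$ to the vanishing of the $\FF$-projection of that drift.

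First I would apply Girsanov's theorem for jump-diffusions to the density process $G_\theta$ of \eqref{density}--\eqref{density-2}. Under $Q_\theta$ the process $B^{Q_\theta}(t):=B(t)-\int_0^t\theta_0(s)\,ds$ is a $(Q_\theta,\GG)$-Brownian motion and $\tN^{Q_\theta}(dt,d\zeta):=N(dt,d\zeta)-\bigl(1+\theta_1(t,\zeta)\bigr)\nu(d\zeta)\,dt$ is the $Q_\theta$-compensated measure. Substituting $dB=dB^{Q_\theta}+\theta_0\,dt$ and $\tN=\tN^{Q_\theta}+\theta_1\,\nu(d\zeta)\,dt$ into \eqref{S} and collecting the finite-variation terms gives the $(Q_\theta,\GG)$-canonical decomposition
\[
dS(t)=\beta(t)\,dt+\sigma(t)\,dB^{Q_\theta}(t)+\int_\R\!\gamma(t,\zeta)\,\tN^{Q_\theta}(dt,d\zeta),
\qquad
\beta(t):=b(t)+\sigma(t)\,\theta_0(t)+\int_\R\!\gamma(t,\zeta)\,\theta_1(t,\zeta)\,\nu(d\zeta).
\]
Here I would invoke Assumption \ref{ass-easy}-(i) (bounded $b,\sigma,\gamma$ and finite $\nu$) together with $dQ_\theta/dP\in L^2(P)$ and Lemma \ref{moments} to verify that $\beta$ is $Q_\theta$-integrable and that the two stochastic integrals are genuine $(Q_\theta,\GG)$-martingales, so that the displayed decomposition is legitimate.

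The second step is the passage to $\FF$. By Assumption \ref{lcl-bdd}-(ii) the process $S$ is $\FF$-adapted, hence its $(Q_\theta,\FF)$-canonical decomposition has finite-variation part equal to the $\FF$-dual predictable projection of $\int_0^\cdot\beta(u)\,du$, namely $\int_0^\cdot{}^{p,\FF}\beta(u)\,du$, where ${}^{p,\FF}\beta$ is the $\FF$-predictable projection of $\beta$ under $Q_\theta$. The crucial simplification is that, since $\F_s\subseteq\G_s$ for every $s$, the $(Q_\theta,\GG)$-martingale part disappears under $\F_s$-conditioning by the tower property; thus $S$ is a $(Q_\theta,\FF)$-martingale if and only if this $\FF$-drift vanishes, i.e.\ ${}^{p,\FF}\beta=0$ $dt\otimes dQ_\theta$-a.e. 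Because $\FF$ is right-continuous and its jump times form a Lebesgue-null set in time, ${}^{p,\FF}\beta(t)=E_{Q_\theta}[\beta(t)\,|\,\F_{t-}]=E_{Q_\theta}[\beta(t)\,|\,\F_t]$ for a.a.\ $t$, which turns the condition into $E_{Q_\theta}[\beta(t)\,|\,\F_t]=0$ for a.a.\ $t\in[0,T]$, exactly the stated identity. Both implications follow from this characterization: for $(\Leftarrow)$, the pointwise condition and $\F_s\subseteq\F_u$ yield $E_{Q_\theta}[\beta(u)\,|\,\F_s]=0$ for $s\le u$, whence $E_{Q_\theta}[S(t)-S(s)\,|\,\F_s]=0$; for $(\Rightarrow)$, the vanishing of the $\FF$-drift together with the predictable-equals-optional identity above recovers the pointwise condition.

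The main obstacle I anticipate is purely technical and sits in the integrability bookkeeping of the second paragraph: checking that $\beta\in L^1(dt\otimes dQ_\theta)$ and that the driving integrals are true $Q_\theta$-martingales, so that $S$ is a special $(Q_\theta,\FF)$-semimartingale to which the projection theorem applies. This is exactly where Assumption \ref{ass-easy}-(i) and Lemma \ref{moments} are indispensable, and it is the point at which the \emph{global} statement becomes available without the localizing sequence $\{\tau_n\}$ needed in Proposition \ref{PIEMM}: from $dQ_\theta/dP\in L^2(P)$ one controls $(\theta_0,\theta_1)$ weighted by $G_\theta$, while the boundedness of $b,\sigma,\gamma$, the finiteness of $\nu$, and the finite moments of $S$ dominate the remaining terms. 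Once the $Q_\theta$-decomposition is justified, the projection argument is routine.
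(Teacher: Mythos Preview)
Your approach is correct and reaches the same conclusion, but it takes a different route from the paper. The paper does not pass to $Q_\theta$ via Girsanov at all; instead it stays under $P$ and applies the integration-by-parts formula directly to the product $G_\theta S$, exactly as in the Appendix proof of Proposition~\ref{PIEMM} (leading to the decomposition \eqref{proof-2}, now without the indicator $\ind_{\{t\leq\tau_n\}}$), and then uses Bayes' rule as in \eqref{proof-1} to translate the $(Q_\theta,\FF)$-martingale property of $S$ into the $(P,\FF)$-martingale property of $G_\theta S$. The only new ingredient compared with Proposition~\ref{PIEMM} is verifying that the $\GG$-local martingale part of \eqref{proof-2} is a true $\GG$-martingale on all of $[0,T]$; this the paper dispatches via Lemma~\ref{moments} combined with $dQ_\theta/dP\in L^2(P)$ and standard inequalities. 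Your Girsanov-plus-predictable-projection argument is the dual viewpoint: it makes the $Q_\theta$-drift $\beta$ explicit and handles the passage to $\FF$ through projection theory, whereas the paper's approach avoids changing measure and stays with the elementary conditional-expectation computation already worked out in the Appendix. Both routes hinge on the same integrability bookkeeping (boundedness of $b,\sigma,\gamma$, finiteness of $\nu$, moments of $S$, and $G_\theta(T)\in L^2(P)$), so neither gains a real technical advantage; the paper's version is simply shorter because it literally reuses the earlier computation.
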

\begin{proof}
The claim can be proved by relying on the same arguments used in the proof of Proposition \ref{PIEMM} (of course, without stopping by $\tau_n$). Note that, in the context of this subsection, the true $\GG$-martingale property of the $\GG$-local martingale part in \eqref{proof-2} can be verified by relying on Lemma \ref{moments} together with 
standard inequalities, using the fact that $dQ_{\theta}/dP\in L^2(P)$.
\end{proof}

As in Section \ref{S3.3}, we can now combine Propositions \ref{BSDE-opt-easy} and \ref{PIEMM-dens} in order to obtain the equivalence between the global viability of the financial market (in the sense of Definition \ref{viability-glob-easy}) and the existence of a PIEMM. We omit the proof, which is entirely similar to the proof of Theorem \ref{viability-PIEMM-loc}.

\begin{Thm}	\label{viability-bdd}
Suppose that Assumption \ref{ass-easy} holds. Then the following are equivalent:
\begin{myenumerate}
\item
the financial market is globally viable, in the sense of Definition \ref{viability-glob-easy}, and $\phi^*\in\A^U_{\FF}$ solves the partial information optimal portfolio problem \eqref{utility-max};
\item
for $\phi^*\in\A^U_{\FF}$, the measure $Q^{\phi^*}\sim P$ on $(\Omega,\G_T)$ defined by
\[
\frac{dQ^{\phi^*}}{dP} := \frac{U'\bigl(X_{\phi^*}(T)\bigr)}{E\bigl[U'\bigl(X_{\phi^*}(T)\bigr)\bigr]}
\]
is a PIEMM, in the sense of Definition \ref{PIEMM}.
\end{myenumerate}
\end{Thm}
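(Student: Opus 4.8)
The plan is to mirror the proof of Theorem \ref{viability-PIEMM-loc}, replacing the local BSDE characterization (Proposition \ref{BSDE-opt}) and the local PIEMM characterization (Proposition \ref{PIEMM}) by their global counterparts (Propositions \ref{BSDE-opt-easy} and \ref{PIEMM-dens}), and dropping the stopping time $\tau_n$ throughout. The essential point is that Assumption \ref{ass-easy} furnishes directly, without any localization, the integrability properties that in Section \ref{S3} were available only after stopping.

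For the implication (i) $\Rightarrow$ (ii) I would start from a maximizer $\phi^*$ of \eqref{utility-max} and invoke Proposition \ref{BSDE-opt-easy}, so that the unique solution $(p,q,r)\in\M^2\times L^2(B)\times G^2(\tN)$ to the BSDE \eqref{BSDE-glob} satisfies the associated first-order condition. I would then set
\[
G(t) := \frac{p(t)}{p(0)}, \qquad \theta_0(t) := \frac{q(t)}{p(t)}, \qquad \theta_1(t,\zeta) := \frac{r(t,\zeta)}{p(t-)},
\]
and, exactly as in the local argument, substitute \eqref{BSDE-glob} to check that $G$ solves the SDE \eqref{density}, so that $dQ^{\phi^*}/dP:=G(T)$ coincides with the normalised marginal utility density. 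Feeding these definitions into the first-order condition then produces the characterization of Proposition \ref{PIEMM-dens}, which applies provided $G(T)\in L^2(P)$; the latter is guaranteed because $\phi^*\in\A^U_{\FF}$ forces $E\bigl[U'\bigl(X_{\phi^*}(T)\bigr)^2\bigr]<\infty$.

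For the converse (ii) $\Rightarrow$ (i) I would let $G(t):=E\bigl[U'\bigl(X_{\phi^*}(T)\bigr)\bigm|\G_t\bigr]/E\bigl[U'\bigl(X_{\phi^*}(T)\bigr)\bigr]$, extract $(\theta_0,\theta_1)$ from the martingale representation \eqref{density}, apply Proposition \ref{PIEMM-dens} to obtain the global analogue of condition \eqref{thm-3}, and then reconstruct $(p,q,r)$ via $p(t):=E\bigl[U'\bigl(X_{\phi^*}(T)\bigr)\bigr]G(t)$, $q(t):=p(t)\,\theta_0(t)$ and $r(t,\zeta):=p(t-)\,\theta_1(t,\zeta)$. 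Since $U'\bigl(X_{\phi^*}(T)\bigr)\in L^2(P)$, these processes lie in $\M^2\times L^2(B)\times G^2(\tN)$, solve \eqref{BSDE-glob}, and verify the first-order condition of Proposition \ref{BSDE-opt-easy}, whence optimality of $\phi^*$ follows.

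I expect the main obstacle to be verifying, at the global level, the two integrability requirements that the stopping times discharged automatically in Section \ref{S3}: first, the admissibility of the perturbed strategies needed to apply the necessary and sufficient maximum principle of \cite{BO} (this is precisely what condition \eqref{utility-cond} in Assumption \ref{ass-easy}-(ii) is designed to guarantee), and second, the true $\GG$-martingale property of the local-martingale terms entering the characterization of PIEMM densities. Both are controlled here by Lemma \ref{moments} together with the standing $L^2$-integrability of the relevant densities; once these checks are in place, the algebraic manipulations are identical to those in the proof of Theorem \ref{viability-PIEMM-loc}, which is why the proof can legitimately be omitted.
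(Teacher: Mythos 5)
Your proposal is correct and follows exactly the route the paper intends: the authors omit the proof of Theorem \ref{viability-bdd} precisely because it is obtained from the proof of Theorem \ref{viability-PIEMM-loc} by substituting Propositions \ref{BSDE-opt-easy} and \ref{PIEMM-dens} for their local counterparts and dropping the stopping times, with Assumption \ref{ass-easy} and Lemma \ref{moments} supplying the integrability that localization provided in Section \ref{S3}. Your identification of the two integrability checks (admissibility of perturbed strategies via \eqref{utility-cond}, and the true martingale property of the local-martingale term via the $L^2$ density) matches the paper's own remarks in the proofs of those two global propositions.
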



\subsection{The general case}	\label{S4.2}

In the present section, we study the issue of global market viability in the more general case where $b$, $\sigma$ and $\gamma$ are only $\FF$-locally bounded, as in Section \ref{S2}, without assuming that the stronger simplifying Assumption \ref{ass-easy} holds. In this case, as will be shown by the explicit example contained in Section \ref{S5}, we cannot characterize global market viability in terms of PIEMMs and we need to rely on the localization approach described in Section \ref{S3}, adopting the following definition of global market viability.

\begin{Def}[Global market viability]	\label{viability-glob}
Let $U$ be a utility function. The financial market is said to be \emph{globally viable} if, for all $n\in\N$, Problem \ref{utility-loc} admits an optimal solution $\phi^*_n\in\A_{\FF}^U(n)$.
\end{Def}

In general, one cannot hope to obtain a full characterization of global market viability in terms of (partial information) equivalent martingale measures defined on the whole investment horizon $[0,T]$ (not even in terms of partial information equivalent local martingale measures), as will be shown in Section \ref{S5}. Hence, we need to introduce the following notion\footnote{If the sequence $\{\tau_n\}_{n\in\N}$ is composed of $\GG$-stopping times, not necessarily of $\FF$-stopping times, and if the inclusion $\FF^S\subseteq\FF$ does not hold, then Definition \ref{deflator} can be weakened by requiring that the $\FF$-optional projection of $ZS$ is an $\FF$-local martingale. In that case, analogous versions of Theorems \ref{ex-defl} and \ref{viability-global} can be established.}, which corresponds to a weaker counterpart of the density process of a PIEMM and extends to the partial information setting the concept of local martingale deflator introduced in \cite{Kar}.

\begin{Def}	\label{deflator}
A strictly positive $\FF$-local martingale $Z=\bigl\{Z(t);t\in[0,T]\bigr\}$ with $Z(0)=1$ is said to be a \emph{partial information local martingale deflator (PILMD)} if $ZS$ is an $\FF$-local martingale.
\end{Def}

As shown in \cite{Kar}, the existence of a local martingale deflator is equivalent to the \emph{no arbitrage of the first kind} (NA1) condition, which is in turn equivalent to the absence of \emph{unbounded profits with bounded risk} (NUPBR), see \cite{KK}. In particular, the NA1/NUPBR condition can be shown to be strictly weaker than the classical \emph{no free lunch with vanishing risk} (NFLVR) condition introduced in \cite{DS0}, the latter being equivalent (in the case of locally bounded processes) to the existence of an equivalent local martingale measure (in Section \ref{S5} we will propose an example of a globally viable financial market that does not satisfy NFLVR).
We then have the following result\footnote{Actually, Theorem \ref{ex-defl} can be regarded as a special case of a more general result. Indeed, it has been recently proved in \cite{CDM} that, in a general semimartingale setting, utility maximization problems are solvable along an increasing sequence of stopping times if and only if NUPBR holds (and, hence, if and only if there exists a local martingale deflator, due to \cite{Kar}). However, except for the case of exponential utility, the general results of \cite{CDM} do not yield an explicit description of the local martingale deflator.}, the proof of which (postponed to the Appendix) is mainly based on elementary computations, relying on Assumption \ref{lcl-bdd}, together with the implication (i) $\Rightarrow$ (ii) of Theorem \ref{viability-PIEMM-loc}.

\begin{Thm}	\label{ex-defl}
If the financial market is globally viable, in the sense of Definition \ref{viability-glob}, then there exists a PILMD.
\end{Thm}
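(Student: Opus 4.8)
The plan is to feed the local result into a pasting construction. Fix $n\in\N$; by global viability (Definition \ref{viability-glob}) Problem \ref{utility-loc} admits an optimal solution $\phi^*_n$, so the implication (i)$\Rightarrow$(ii) of Theorem \ref{viability-PIEMM-loc} produces a PIEMM $Q^n:=Q^{\phi^*_n,n}$ up to $\tau_n$, whose $\GG$-density process
\[
G^n(t)=\frac{E\bigl[U'\bigl(X_{\phi^*_n}(T\wedge\tau_n)\bigr)\,\big|\,\G_t\bigr]}{E\bigl[U'\bigl(X_{\phi^*_n}(T\wedge\tau_n)\bigr)\bigr]}
\]
is a strictly positive $\GG$-martingale in $\M^2$ with $G^n(0)=1$ (strict positivity because $U'>0$).

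First I would pass from full to partial information by taking the $\FF$-optional projection $Z^n(t):=E[G^n(t)\,|\,\F_t]$. Since $G^n$ is a strictly positive $\GG$-martingale, $Z^n$ is a strictly positive $\FF$-martingale with $Z^n(0)=1$, and, by the tower property, it is exactly the $\FF$-density process of the restriction of $Q^n$ to $\F_T$. Here Assumption \ref{lcl-bdd} enters decisively: part (ii) makes $S$, hence the uniformly bounded process $S^{\tau_n}$, an $\FF$-adapted process, so the generalised Bayes formula applies in the filtration $\FF$ and the PIEMM property of $Q^n$ translates into the statement that $Z^nS^{\tau_n}$ is a $(P,\FF)$-martingale. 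In other words $Z^n$ is a partial-information local martingale deflator ``up to $\tau_n$'': both $Z^n$ and $Z^nS$ are $(P,\FF)$-martingales on $[0,\tau_n]$.

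Next I would paste these local deflators across the localizing sequence. Setting $\tau_0:=0$ and $Z(0):=1$, I define recursively $Z(t):=Z(\tau_{n-1})\,Z^n(t)/Z^n(\tau_{n-1})$ for $t\in(\tau_{n-1},\tau_n]$; this is a well-defined, strictly positive, $\FF$-adapted process, because each $Z^n$ is strictly positive and the multiplicative factor $c_n:=Z(\tau_{n-1})/Z^n(\tau_{n-1})$ is $\F_{\tau_{n-1}}$-measurable. I would then verify, by induction on $n$ and with $\{\tau_n\}$ as localizing sequence, that $Z^{\tau_n}$ and $(ZS)^{\tau_n}=Z^{\tau_n}S^{\tau_n}$ are $(P,\FF)$-martingales: on $(\tau_{n-1},\tau_n]$ one has $Z(t)=c_nZ^n(t)$ and $Z(t)S(t)=c_nZ^n(t)S^{\tau_n}(t)$, so the $\F_{\tau_{n-1}}$-measurability of $c_n$ together with the martingale properties of $Z^n$ and $Z^nS^{\tau_n}$ gives the martingale property on each block, while the construction matches the values at the pasting times $\tau_{n-1}$. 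Since $\tau_n\nearrow+\infty$ $P$-a.s., for a.e. $\omega$ one has $\tau_n\geq T$ for $n$ large, so $\{\tau_n\}$ is a genuine localizing sequence on $[0,T]$; hence $Z$ and $ZS$ are strictly positive $\FF$-local martingales with $Z(0)=1$, i.e. $Z$ is a PILMD.

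The main obstacle I anticipate is twofold. Conceptually, the delicate point is the passage from the $\GG$-density $G^n$ to an $\FF$-object: one cannot use $G^n$ itself, since it is not $\FF$-adapted and $G^nS^{\tau_n}$ need not be a $\GG$-martingale (the PIEMM only makes $S$ a martingale in $\FF$, not in $\GG$), so the optional projection, Assumption \ref{lcl-bdd}(ii), and the $\FF$-Bayes formula are essential. Technically, the care lies in the pasting step: one must check that the multiplicative patching preserves integrability and that the pieces agree at the stopping times $\tau_{n-1}$, so that the induction genuinely yields the $(P,\FF)$-martingale property of $Z^{\tau_n}$ and $(ZS)^{\tau_n}$ on $[0,\tau_n]$ rather than merely a property on each block. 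Note that no consistency condition between the $\phi^*_n$ is needed here, since each $Z^n$ is used only on the block $(\tau_{n-1},\tau_n]$ and no attempt is made to aggregate all the densities into a single coherent one.
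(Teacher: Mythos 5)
Your proposal is correct and follows essentially the same route as the paper: the local PIEMMs from Theorem \ref{viability-PIEMM-loc} are turned into $\FF$-martingales $Z^n(t)=E\bigl[U'\bigl(X_{\phi^*_n}(T\wedge\tau_n)\bigr)\big|\F_t\bigr]/E\bigl[U'\bigl(X_{\phi^*_n}(T\wedge\tau_n)\bigr)\bigr]$ (your optional projection of $G^n$ coincides with the paper's directly defined $Z_{\phi^*_n}$ by the tower property), and these are pasted multiplicatively along $\{\tau_n\}_{n\in\N}$, your recursive block definition being exactly the paper's infinite product $Z(t)=\prod_{k\geq1}Z_{\phi^*_k}(t\wedge\tau_k)/Z_{\phi^*_k}(t\wedge\tau_{k-1})$. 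Your observation that no consistency condition is needed here also matches the paper, which reserves that hypothesis for the aggregation result of Theorem \ref{viability-global}.
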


Note that, as for Remark \ref{duality}, the no-arbitrage property of the model (now in terms of NA1/NUPBR) comes as a  direct consequence of the (global) viability of the financial market. In our setting, however, the mere existence of a PILMD is not sufficient to ensure the global viability of the financial market. Nevertheless, it turns out that, if the financial market is globally viable, in the sense of Definition \ref{viability-glob}, and if the family of optimal strategies $\{\phi_n^*\}_{n\in\N}$ satisfies the following consistency condition, then the global viability of the financial market is equivalent to the existence of a PILMD which aggregates the density processes of all local PIEMMs (see Theorem \ref{viability-global}).

\begin{Def}	\label{cons-cond}
If the financial market is globally viable, in the sense of Definition \ref{viability-glob}, we say that the family of optimal strategies $\{\phi^*_n\}_{n\in\N}$ is \emph{consistent} if the following hold:
\be	\label{consistency}
\frac{E\bigl[U'\bigl(X_{\phi^*_n}(T\wedge\tau_n)\bigr)|\F_{T\wedge\tau_{n-1}}\bigr]}{E\bigl[U'\bigl(X_{\phi^*_n}(T\wedge\tau_n)\bigr)\bigr]} 
= \frac{U'\bigl(X_{\phi^*_{n-1}}(T\wedge\tau_{n-1})\bigr)}{E\bigl[U'\bigl(X_{\phi^*_{n-1}}(T\wedge\tau_{n-1})\bigr)\bigr]}\,,
\qquad\text{ for all }n\in\N.
\ee
\end{Def}

\begin{Rem}	\label{disc-cond}
\begin{myenumerate}
\item
As a first and easy case, condition \eqref{consistency} holds whenever the model is \emph{complete}, in the sense that, for every $n\in\N$, the set of all PIEMMs up to $\tau_n$ consists of a single element. Indeed, Theorem \ref{viability-PIEMM-loc} shows that $Q^{\phi^*_n}$ defined as in \eqref{thm-1} is a PIEMM up to $\tau_n$ and, analogously, $Q^{\phi^*_{n-1}}$ is a PIEMM up to $\tau_{n-1}$. However, since $\tau_{n-1}\leq\tau_n$ $P$-a.s. for every $n\in\N$, the measure $Q^{\phi^*_n}$ is also a PIEMM up to $\tau_{n-1}$ and, if there exists a unique PIEMM up to $\tau_{n-1}$, we then have $Q^{\phi^*_n}|_{\F_{T\wedge\tau_{n-1}}}=Q^{\phi^*_{n-1}}$. Due to \eqref{thm-1}, this clearly implies the consistency condition \eqref{consistency}.
\item 
As can be readily checked, a general sufficient condition for the validity of \eqref{consistency} is given by $\phi^*_{n-1}=\phi^*_n\ind_{\lsi0,\tau_{n-1}\rsi}$ together with the martingale property of $U'\bigl(X_{\phi^*_n}(\cdot\wedge\tau_n)\bigr)$, for every $n\in\N$. In particular, this holds in the following two cases: (a) for a logarithmic utility function in a wide class of jump-diffusion models, for which the reciprocal of the log-optimal portfolio is a local martingale (see e.g. \cite{KOS} as well as the example in Section \ref{S5}); (b) for an exponential utility function when trading strategies are restricted to a compact set, in a wide class of jump-diffusion models for which the measure $\nu$ is finite, as in \cite{M} (see also \cite{HIM} for related results in the continuous case)\footnote{Indeed, if the measure $\nu$ is finite and trading strategies are restricted to a compact set, the local boundedness of $b$, $\sigma$ and $\gamma$ together with the results of Section 4 of \cite{M} imply the martingale property of $U\bigl(X_{\phi^*_n}(\cdot\wedge\tau_n)\bigr)$ (it can be shown that the class of admissible strategies adopted in \cite{M} coincides with $\A_{\FF}^U$), for every $n\in\N$. In particular, since the utility is exponential, this implies the martingale property of $U'\bigl(X_{\phi^*_n}(\cdot\wedge\tau_n)\bigr)$ and also that $\phi^*_{n-1}=\phi^*_n\ind_{\lsi0,\tau_{n-1}\rsi}$, for every $n\in\N$.}.
\item
Note that, in view of Proposition \ref{BSDE-opt}, condition \eqref{consistency} can be equivalently formulated as a consistency condition on the terminal values of the solutions $\{p^n\}_{n\in\N}\subset\M^2$ to the adjoint BSDE \eqref{BSDE}. 
\end{myenumerate}
\end{Rem}

By relying on Theorem \ref{viability-PIEMM-loc}, we can now formulate the announced equivalence result, which characterizes global market viability, with a consistent family of optimal strategies, under partial information.

\begin{Thm}	\label{viability-global}
The following are equivalent:
\begin{myenumerate}
\item
the financial market is globally viable, in the sense of Definition \ref{viability-glob}, with a consistent family $\{\phi^*_n\}_{n\in\N}$ of optimal strategies;
\item
there exists a family of strategies $\{\phi^*_n\}_{n\in\N}$, with $\phi^*_n\in\A_{\FF}^U(n)$, for every $n\in\N$, such that the process $Z=\bigl\{Z(t):t\in[0,T]\bigr\}$ defined by
\be	\label{Z}
Z(t) := \ind_{\{t=0\}}+\sum_{k=1}^{\infty}\ind_{\{\tau_{k-1}<t\leq\tau_k\}}
\frac{E\bigl[U'\bigl(X_{\phi^*_k}(T\wedge\tau_k)\bigr)|\F_t\bigr]}{E\bigl[U'\bigl(X_{\phi^*_k}(T\wedge\tau_k)\bigr)\bigr]},
\qquad\text{for all }t\in[0,T],
\ee
is a PILMD satisfying $Z(T\wedge\tau_n)=U'\bigl(X_{\phi^*_n}(T\wedge\tau_n)\bigr)/E\bigl[U'\bigl(X_{\phi^*_n}(T\wedge\tau_n)\bigr)\bigr]$, for all $n\in\N$, with $\tau_0:=0$.
\end{myenumerate}
\end{Thm}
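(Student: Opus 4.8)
The plan is to prove the two implications separately, using Theorem \ref{viability-PIEMM-loc} as the bridge between optimal strategies and local PIEMM densities, and then checking that the pasted-together process $Z$ in \eqref{Z} is a genuine $\FF$-local martingale with the required stopped values.

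For the implication (i) $\Rightarrow$ (ii), I would proceed as follows. First, global viability gives for each $n$ an optimal $\phi^*_n\in\A^U_{\FF}(n)$, and Theorem \ref{viability-PIEMM-loc} shows that the normalised marginal utility $M_n(t):=E\bigl[U'\bigl(X_{\phi^*_n}(T\wedge\tau_n)\bigr)\,|\,\F_t\bigr]/E\bigl[U'\bigl(X_{\phi^*_n}(T\wedge\tau_n)\bigr)\bigr]$ is the density process of a PIEMM up to $\tau_n$. The candidate $Z$ in \eqref{Z} is exactly the process that equals $M_k$ on each stochastic interval $\lsi\tau_{k-1},\tau_k\rsi$. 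The key point is that the consistency condition \eqref{consistency} is precisely what forces these pieces to glue continuously across the stopping times $\tau_{n-1}$: evaluating $M_n$ at $t=T\wedge\tau_{n-1}$ and using \eqref{consistency} yields $M_n(T\wedge\tau_{n-1})=M_{n-1}(T\wedge\tau_{n-1})$, so there is no jump of $Z$ induced purely by switching index $k$ at $\tau_{k-1}$. I would then verify the local martingale property of $Z$ by stopping at $\tau_n$: on $\lsi0,\tau_n\rsi$ the process $Z$ coincides (by consistency and an induction using the tower property) with the true $\GG$-martingale density $M_n$ restricted appropriately, so $Z^{\tau_n}$ is an $\FF$-martingale after taking the relevant conditional structure into account, and since $\tau_n\nearrow\infty$ this makes $Z$ an $\FF$-local martingale. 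The stopped-value identity $Z(T\wedge\tau_n)=U'\bigl(X_{\phi^*_n}(T\wedge\tau_n)\bigr)/E[\,\cdot\,]$ is read off directly from \eqref{Z}. Finally, to see that $Z$ is a PILMD I must check that $ZS$ is an $\FF$-local martingale: on each interval the PIEMM property of $M_k$ (Definition \ref{PIEMM-loc}) says $S^{\tau_k}$ is a $(Q_{\theta_k},\FF)$-martingale, which by the abstract Bayes/Girsanov correspondence is equivalent to $M_k\,S^{\tau_k}$ being a $(P,\FF)$-martingale; stopping $ZS$ at $\tau_n$ reduces to this, so $(ZS)^{\tau_n}$ is an $\FF$-martingale and $ZS$ is an $\FF$-local martingale.

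For the converse (ii) $\Rightarrow$ (i), I would reverse the chain. Given the process $Z$ of the form \eqref{Z} with $Z(T\wedge\tau_n)=U'\bigl(X_{\phi^*_n}(T\wedge\tau_n)\bigr)/E[\,\cdot\,]$ that is a PILMD, stopping at $\tau_n$ shows that $Z^{\tau_n}$ is a strictly positive $\FF$-local martingale with $(ZS)^{\tau_n}$ an $\FF$-local martingale; the uniform boundedness of $S^{\tau_n}$, $b^n$, $\sigma^n$, $\gamma^n$ from Assumption \ref{lcl-bdd}-(i) upgrades these to genuine $\FF$-martingales and identifies the terminal value $Z(T\wedge\tau_n)$ as the density of a PIEMM up to $\tau_n$. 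By the implication (ii) $\Rightarrow$ (i) of Theorem \ref{viability-PIEMM-loc}, this makes $\phi^*_n$ a solution of Problem \ref{utility-loc} for each $n$, i.e. the market is globally viable. It then remains to recover the consistency condition \eqref{consistency}: this should follow by comparing the definition \eqref{Z} at the overlap $t=T\wedge\tau_{n-1}$, where the expression for $Z$ forces $E\bigl[U'(X_{\phi^*_n}(T\wedge\tau_n))|\F_{T\wedge\tau_{n-1}}\bigr]/E[\,\cdot\,]$ to equal $Z(T\wedge\tau_{n-1})=U'(X_{\phi^*_{n-1}}(T\wedge\tau_{n-1}))/E[\,\cdot\,]$, which is exactly \eqref{consistency}.

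I expect the main obstacle to be the careful handling of the gluing at the random times $\tau_{n-1}$ and the rigorous justification that $Z$ is a well-defined $\FF$-local martingale rather than merely a piecewise martingale. The subtle point is that each $M_k$ is a priori a $\GG$-martingale whose conditional structure interacts with the smaller filtration $\FF$, and that the indicator decomposition in \eqref{Z} could introduce spurious jumps at the $\tau_k$ unless the consistency condition rules them out; making precise that the localizing sequence $\tau_n$ simultaneously serves as the reducing sequence for the local martingale property, and that the stopped processes inherit the true martingale property (using the uniform boundedness after stopping), is where the technical care is needed. The PILMD property $ZS\in$ local martingale also requires translating the $(Q_{\theta_k},\FF)$-martingale property of $S^{\tau_k}$ into a $(P,\FF)$-statement about $M_kS^{\tau_k}$, which relies on the fact that the density processes are taken in $\GG$ while the martingale property is demanded in $\FF$ --- so one must argue through conditional expectations with respect to $\F_t$ rather than a naive Girsanov change of measure.
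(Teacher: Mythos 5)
Your proposal is correct and follows essentially the same route as the paper: Theorem \ref{viability-PIEMM-loc} is the bridge in both directions, the consistency condition \eqref{consistency} does the gluing at the stopping times, and the $\FF$-martingale property of the stopped process $Z^{\tau_n}$ yields both the (local) martingale properties of $Z$ and $ZS$ and, in the converse direction, the consistency condition itself. The only cosmetic difference is that the paper establishes the PILMD property by first writing $Z$ as the telescoping product $\prod_{k}Z_{\phi_k^*}(t\wedge\tau_k)/Z_{\phi_k^*}(t\wedge\tau_{k-1})$ already analysed in the proof of Theorem \ref{ex-defl} and then invoking consistency to reduce it to the indicator-sum form \eqref{Z}, whereas you work with \eqref{Z} directly.
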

\begin{proof}
{\bf (i) $\Rightarrow$ (ii)}:
Let the strategy $\phi_n^*\in\A^U_{\FF}(n)$ solve Problem \ref{utility-loc}, for every $n\in\N$, and define the process $Z_{\phi_k^*}=\bigl\{Z_{\phi_k^*}(t);t\in[0,T]\bigr\}$ as in \eqref{Z^k}, for every $k\in\N$, and the process $Z=\bigl\{Z(t);t\in[0,T]\bigr\}$ by $Z(t):=\prod_{k=1}^{\infty}\frac{Z_{\phi_k^*}(t\wedge\tau_k)}{Z_{\phi_k^*}(t\wedge\tau_{k-1})}$. As shown in the proof of Theorem \ref{ex-defl}, the process $Z$ is a PILMD. 
Moreover, as can be easily checked, condition \eqref{consistency} implies that $Z_{\phi_k^*}(t\wedge\tau_{k-1})=Z_{\phi_{k-1}^*}(t\wedge\tau_{k-1})$ for all $t\in[0,T]$ and $k\in\N$, so that the process $Z$ can be equivalently rewritten as follows, for all $t\in[0,T]$:
\[
Z(t) = \ind_{\{t=0\}}+\sum_{k=1}^{\infty}\ind_{\{\tau_{k-1}<t\leq\tau_k\}}Z_{\phi_k^*}(t)
=\ind_{\{t=0\}}+\sum_{k=1}^{\infty}\ind_{\{\tau_{k-1}<t\leq\tau_k\}}
\frac{E\bigl[U'\bigl(X_{\phi_k^*}(T\wedge\tau_k)\bigr)|\F_t\bigr]}{E\bigl[U'\bigl(X_{\phi_k^*}(T\wedge\tau_k)\bigr)\bigr]}.
\]
To complete the proof of the implication (i) $\Rightarrow$ (ii), note that the consistency condition \eqref{consistency}, together with the tower property of conditional expectation, implies that, for all $n\in\N$:
\[	\ba
Z(T\wedge\tau_n)
&= \sum_{k=1}^n\ind_{\{\tau_{k-1}<T\leq\tau_k\}}\frac{U'\bigl(X_{\phi_k^*}(T\wedge\tau_k)\bigr)}{E\bigl[U'\bigl(X_{\phi_k^*}(T\wedge\tau_k)\bigr)\bigr]}
+\ind_{\{T>\tau_n\}}\frac{U'\bigl(X_{\phi_n^*}(T\wedge\tau_n)\bigr)}{E\bigl[U'\bigl(X_{\phi_n^*}(T\wedge\tau_n)\bigr)\bigr]}	\\
&= \sum_{k=1}^n\ind_{\{\tau_{k-1}<T\leq\tau_k\}}\frac{E\left[U'\bigl(X_{\phi_n^*}(T\wedge\tau_n)\bigr)|\F_{T\wedge\tau_k}\right]}{E\bigl[U'\bigl(X_{\phi_n^*}(T\wedge\tau_n)\bigr)\bigr]}
+\ind_{\{T>\tau_n\}}\frac{U'\bigl(X_{\phi_n^*}(T\wedge\tau_n)\bigr)}{E\bigl[U'\bigl(X_{\phi_n^*}(T\wedge\tau_n)\bigr)\bigr]}	\\
&= \frac{U'\bigl(X_{\phi_n^*}(T\wedge\tau_n)\bigr)}{E\bigl[U'\bigl(X_{\phi_n^*}(T\wedge\tau_n)\bigr)\bigr]}\,.
\ea	\]

\noindent {\bf (ii) $\Rightarrow$ (i)}:
Suppose that, for some family of strategies $\{\phi_n^*\}_{n\in\N}$ with $\phi_n^*\in\A_{\FF}^U(n)$, for all $n\in\N$, the process $Z$ defined in \eqref{Z} is a PILMD. Due to the local boundedness assumption (see part (i) of Assumption \ref{lcl-bdd}), the sequence $\{\tau_n\}_{n\in\N}$ is a localizing sequence for $Z$, meaning that the stopped process $Z^{\tau_n}$ is a $\FF$-martingale, for every $n\in\N$. Moreover, since $Z$ is a PILMD, the stopped process $(ZS)^{\tau_n}$ is an $\FF$-martingale, for every $n\in\N$.
In view of Definition \ref{PIEMM-loc}, this means that the measure $Q^n$ defined by $dQ^n:=Z(T\wedge\tau_n)dP$ is a PIEMM up to $\tau_n$. Theorem \ref{viability-PIEMM-loc} then implies that $\phi_n^*$ solves Problem \ref{utility-loc}, for all $n\in\N$. To complete the proof, it remains to prove the consistency condition \eqref{consistency}. This can be shown as follows, using the $\FF$-martingale property of $Z^{\tau_n}$, for all $n\in\N$ and $t\in[0,T]$:
\[
\frac{E\bigl[U'\bigl(X_{\phi_n^*}(T\wedge\tau_n)|\F_{T\wedge\tau_{n-1}}\bigr]}{E\bigl[U'\bigl(X_{\phi_n^*}(T\wedge\tau_n)\bigr]}
-\frac{U'\bigl(X_{\phi_{n-1}^*}(T\wedge\tau_{n-1})}{E\bigl[U'\bigl(X_{\phi_{n-1}^*}(T\wedge\tau_{n-1})\bigr]}
= E\bigl[Z_{\phi^*}(T\wedge\tau_n)|\F_{T\wedge\tau_{n-1}}\bigr]-Z_{\phi^*}(T\wedge\tau_{n-1}) = 0\,.
\]
In view of Definition \ref{viability-glob}, we have thus shown that the financial market is globally viable with a consistent family $\{\phi^*_n\}_{n\in\N}$ of optimal strategies.
\end{proof}

In particular, we want to remark that Theorem \ref{viability-global} gives an explicit description of the PILMD $Z$, which aggregates the expected (normalised) marginal utilities of terminal wealth at the stopping times of the localizing sequence $\{\tau_n\}_{n\in\N}$.

\section{An example}	\label{S5}

This section is meant to be an illustration of the concepts discussed so far in the context of a simple continuous model, based on a three-dimensional Bessel process. Three-dimensional Bessel processes have been extensively studied in relation with the existence of arbitrage opportunities, see e.g. \cite{DS}, Section 2 of \cite{FR} and Example 4.6 in \cite{KK}. In the present example, we will show that, for a logarithmic utility function, the financial market is viable in the local as well as in the global sense, even though the model allows for arbitrage opportunities.

Let $(\Omega,\G,\GG,P)$ be a given filtered probability space, with a standard Brownian motion $B$ and where $\FF:=\GG^B\subseteq\GG$ is the $P$-augmented filtration generated by $B$. We define the (discounted) price process $S$ of a single risky asset as the solution to the following SDE:
\be	\label{Bessel-1}	\left\{	\ba
dS(t) &= \frac{1}{S(t)}\,dt+dB(t),	\quad t\in[0,T];	\\
S(0) &= 1.
\ea	\right.	\ee
The solution to the SDE \eqref{Bessel-1} is a $P$-a.s. strictly positive process known as the three-dimensional Bessel process (see e.g. \cite{RY}, Chapter XI). It is easy to see that there exists a sequence $\{\tau_n\}_{n\in\N}$ of $\FF$-stopping times with $\tau_n\nearrow+\infty$ $P$-a.s. as $n\rightarrow+\infty$ such that $S^{\tau_n}$ and $1/S^{\tau_n}$ are $P$-a.s. uniformly bounded, for every $n\in\N$. Indeed, it suffices to define $\tau_n:=\inf\bigl\{t\in[0,T]:S(t)\notin(1/n,n)\bigr\}$, for $n\in\N$ (with the usual convention $\inf\emptyset=+\infty$). 
A simple application of It\^o's formula gives that $dS^{-1}(t)=-S^{-2}(t)\,dB(t)$, thus showing that $1/S$ is an $\FF$-local martingale or, equivalently, that the stopped process $1/S^{\tau_n}$ is an $\FF$-martingale, for all $n\in\N$. Furthermore, for any $\phi\in\A_{\FF}^U(n)$ and $n\in\N$:
\be	\label{Bessel-2}	\ba
d\left(\frac{X_{\phi}(t\wedge\tau_n)}{S(t\wedge\tau_n)}\right)
&= X_{\phi}(t\wedge\tau_n)\,d\frac{1}{S(t\wedge\tau_n)}
+\frac{\phi(t)}{S(t\wedge\tau_n)}\,dS(t\wedge\tau_n)
-\ind_{\{\tau_n>t\}}\frac{\phi(t)}{S^2(t\wedge\tau_n)}dt	\\
&= X_{\phi}(t\wedge\tau_n)\,d\frac{1}{S(t\wedge\tau_n)}+\frac{\phi(t)}{S(t\wedge\tau_n)}\,dB(t\wedge\tau_n)
\ea	\ee
thus showing that the stopped process $X_{\phi}^{\tau_n}/S^{\tau_n}$ is an $\FF$-martingale, for every $n\in\N$.

Let us consider the logarithmic utility function $U(x)=\log(x)$, with an initial endowment of $x=1$. 
Jensen's inequality together with the martingale property of $X_{\phi}^{\tau_n}/S^{\tau_n}$ gives
\[
E\Bigl[\log\bigl(X_{\phi}(T\wedge\tau_n)/S(T\wedge\tau_n)\bigr)\Bigr]
\leq\log\Bigl(E\bigl[X_{\phi}(T\wedge\tau_n)/S(T\wedge\tau_n)\bigr]\Bigr)=0
\]
meaning that $E\bigl[\log\bigl(X_{\phi}(T\wedge\tau_n)\bigr)\bigr]\leq E\bigl[\log\bigl(S(T\wedge\tau_n)\bigr)\bigr]$, for any $\phi\in\A_{\FF}^U(n)$. This shows that the optimal strategy for the logarithmic utility is a simple buy-and-hold position in the risky asset itself, i.e., $\phi^*_n=1\in\A^U_{\FF}(n)$ for all $n\in\N$. According to Definition \ref{viability-loc}, the financial market is locally viable up to $\tau_n$, for every $n\in\N$. 

We can also verify the local viability of the financial market by applying Theorem \ref{viability-PIEMM-loc}. Indeed, since the stopped process $1/S^{\tau_n}$ is a strictly positive $\FF$-martingale, we can define a probability measure $Q^n$ on $(\Omega,\G_T)$ by letting $dQ^n/dP:=1/S(T\wedge\tau_n)$. Due to Bayes' rule, it is easy to check that $Q^n$ is a PIEMM up to $\tau_n$, in the sense of Definition \ref{PIEMM-loc}. Since $U'(x)=1/x$, the implication (ii) $\Rightarrow$ (i) of Theorem \ref{viability-PIEMM-loc} implies that the financial market is viable up to $\tau_n$ and that $\phi^*_n=1\in\A^U_{\FF}(n)$ solves Problem \ref{utility-loc}, for every $n\in\N$.

Since the process $1/S$ is unbounded, Assumption \ref{ass-easy} fails to hold and, hence, we cannot rely on the approach presented in Section \ref{S4.1} to study the global viability of the financial market. More precisely, we can prove that the process $1/S$ cannot be used as the density process of a PIEMM on $[0,T]$, since $1/S$ is a strict local martingale, according to the terminology of \cite{ELY}, being a local martingale which fails to be a true martingale, so that $E[1/S(T)]<1$. 
Let us explain with some more details this phenomenon. We define the measure $Q^{\phi^*}$ as follows:
\[
\frac{dQ^{\phi^*}}{dP} 
:= \frac{U'\bigl(X_{\phi^*}(T)\bigr)}{E\bigl[U'\bigl(X_{\phi^*}(T)\bigr)\bigr]}
= \frac{1/S(T)}{E\bigl[1/S(T)\bigr]}.
\]
If $Q^{\phi^*}$ were a PIEMM, then its density process $G=\bigl\{G(t);t\in[0,T]\bigr\}$, with $dQ^{\phi^*}|_{\F_t}:=G(t)\,dP|_{\F_t}$ for all $t\in[0,T]$, would be an $\FF$-martingale admitting the following representation, as in \eqref{density-2}:
\[
G(t)=\exp\left(\int_0^t\!\theta_0(s)dB(s)-\frac{1}{2}\int_0^t\!\theta_0^2(s)ds\right),
\qquad\text{for all }t\in[0,T],
\]
for some $\FF$-predictable process $\theta_0=\bigl\{\theta_0(t);t\in[0,T]\bigr\}$ with $\int_0^T\!\theta_0^2(t)dt<\infty$ $P$-a.s., so that:
\be	\label{Bessel-3}
d\bigl(G(t)S(t)\bigr) = S(t)dG(t) + G(t)dS(t) + d\bigl\langle G,S\bigr\rangle(t)
= S(t)dG(t) + G(t)dB(t) + G(t)\!\left(\frac{1}{S(t)}+\theta_0(t)\right)dt.
\ee
If $Q^{\phi^*}$ were a PIEMM, then the product $GS$ would be an $\FF$-(local) martingale and equation \eqref{Bessel-3} would then imply that $\theta_0(t)=-1/S(t)$ for a.a. $t\in[0,T]$, meaning that $dG(t)=-G(t)/S(t)\,dB(t)$. But, since $G(0)=1/S(0)=1$, this would in turn imply that $G$ and $1/S$ solve the same SDE and, hence, one would conclude that $G=1/S$, thus contradicting the martingale property of $G$. This shows that, in the context of the present example, the marginal utility of the optimal terminal wealth cannot be taken as the density of a PIEMM. The failure of the martingale property of $1/S$ is also linked to the existence of multiple solutions to the BSDE \eqref{BSDE} on the time horizon $[0,T]$ beyond the class $\M^2\times L^2(B)$, as discussed in the recent paper \cite{X}.

We conclude the discussion of this example by showing that the financial market is globally viable with a consistent family of optimal strategies, in the sense of Definitions \ref{viability-glob} and \ref{cons-cond}. Indeed, we already know that $\phi^*:=1=\phi^*_n\in\A_{\FF}^U(n)$ solves Problem \ref{utility-loc}, for every $n\in\N$. Moreover, the consistency condition \eqref{consistency} also holds, due to the martingale property of the stopped process $1/S^{\tau_n}$, as explained in part (ii) of Remark \ref{disc-cond}.
Alternatively, one can prove the global viability of the financial market by applying Theorem \ref{viability-global}. Indeed,  take $\phi=1\in\bigcap_{n\in\N}\A_{\FF}^U(n)$ and consider the process $Z_{\phi}$ defined in \eqref{Z}. Since $U'(x)=1/x$, for every $n\in\N$, it is immediate to check that $Z_{\phi}=1/S$:
\[
Z_{\phi}(t) = \ind_{\{t=0\}}
+\sum_{k=1}^{\infty}\ind_{\{\tau_{k-1}<t\leq\tau_k\}}\frac{E\bigl[1/S(T\wedge\tau_k)|\F_t\bigr]}{E\bigl[1/S(T\wedge\tau_k)\bigr]}
= \ind_{\{t=0\}}+\sum_{k=1}^{\infty}\ind_{\{\tau_{k-1}<t\leq\tau_k\}}\frac{1}{S(t\wedge\tau_k)}
= \frac{1}{S(t)}.
\]
Equation \eqref{Bessel-2} together with the martingale property of $1/S^{\tau_n}$, for all $n\in\N$, shows that $1/S$ is a PILMD and Theorem \ref{viability-global} implies then that the financial market is globally viable with a consistent family of optimal strategies.

\vspace{1cm}

\setstretch{1}
\noindent \textbf{Acknowledgements:}
The authors are thankful to the anonymous referees for helpful comments that helped to improve the paper. 
The research leading to these results has received funding from the European Research Council under the European Community's Seventh Framework Programme (FP7/2007-2013) / ERC grant agreement no [228087].

\appendix
\sets

\section{Appendix}

\subsubsection*{Proof of Lemma \ref{BSDE-sol}.}

For any fixed $\phi\in\A^U_{\FF}(n)$, the random variable $U'\bigl(X_{\phi}(T\wedge\tau_n)\bigr)$ belongs to $L^2(P)$ and, hence, the $\GG$-martingale $p^n=\bigl\{p^n(t);t\in[0,T]\bigr\}$ defined by $p^n(t):=E\bigl[U'\bigl(X_{\phi}(T\wedge\tau_n)\bigr)|\G_t\bigr]$, for all $t\in[0,T]$, satisfies $p^n(T)=U'\bigl(X_{\phi}(T\wedge\tau_n)\bigr)$ and belongs to $\M^2$, as a consequence of Doob's inequality. 
Since the pair $(B,\tN)$ enjoys the martingale representation property in the filtration $\GG$ (see e.g. \cite{R}, Theorem 2.3) and since the martingale representation property is stable under stopping (see e.g. \cite{HWY}, Lemma 13.8), there exists a unique couple $(q^n,r^n)\in L^2(B)\times G^2(\tN)$ such that \eqref{BSDE} is satisfied.
In order to prove the integrability property of $p^n$, it suffices to note that, using Doob's inequality:
\[
E\left[\int_0^T\!\bigl(p^n(t)\bigr)^2dt\right] 
\leq T\,E\left[\underset{t\in[0,T]}{\sup}\bigl|p^n(t)\bigr|^2\right] 
\leq 4T\,E\left[\bigl(p^n(T)\bigr)^2\right] 
= 4T\,E\Bigl[U'\bigl(X_{\phi}(T\wedge\tau_n)\bigr)^2\Bigr] < \infty.
\]

\subsubsection*{Proof of Proposition \ref{BSDE-opt}.}

The proof of the proposition follows from an adaptation of the sufficient and necessary maximum principles under partial information proven in \cite{BO}, together with a suitable localization procedure. However, for the convenience of the reader, we prefer to give full details.
Suppose that, for some $\phi\in\A_{\FF}^U(n)$, the unique solution $(p^n,q^n,r^n)\in\M^2\times L^2(B)\times G^2(\tN)$ to the BSDE \eqref{BSDE} satisfies condition \eqref{BSDE-cond} and let $\bar{\phi}$ be any element of $\A_{\FF}^U(n)$. Define the sequence of stopping times $\{\tilde{\varrho}_k\}_{k\in\N}$ by
\[
\tilde{\varrho}_k := \inf\bigl\{t\in[0,T]:\left|X_{\phi}(t)-X_{\bar{\phi}}(t)\right|\geq k\text{ or }\left|p^n(t)\right|\geq k\bigr\}
\]
and let $\varrho_{k,n}:=\tau_n\wedge\tilde{\varrho}_k$, so that $\lim_{k\rightarrow\infty}\varrho_{k,n}=\tau_n$ $P$-a.s. for $k\rightarrow\infty$. 
Then, using the concavity of $U$:
\begin{equation}	\label{A1} \begin{aligned}
E\Bigl[U\bigl(X_{\bar{\phi}}(T\wedge\tau_n)\bigr)-U\bigl(X_{\phi}(T\wedge\tau_n)\bigr)\Bigr]
&\leq E\Bigl[U'\bigl(X_{\phi}(T\wedge\tau_n)\bigr)\bigl(X_{\bar{\phi}}(T\wedge\tau_n)-X_{\phi}(T\wedge\tau_n)\bigr)\Bigr]	\\
&= E\Bigl[p^n(T)\bigl(X_{\bar{\phi}}(T\wedge\tau_n)-X_{\phi}(T\wedge\tau_n)\bigr)\Bigr]	\\
&= E\Bigl[\,\underset{k\rightarrow\infty}{\lim}\;
p^n(T\wedge\varrho_{k,n})\bigl(X_{\bar{\phi}}(T\wedge\varrho_{k,n})-X_{\phi}(T\wedge\varrho_{k,n})\bigr)\Bigr]\\
&= \underset{k\rightarrow\infty}{\lim}E\Bigl[p^n(T\wedge\varrho_{k,n})\bigl(X_{\bar{\phi}}(T\wedge\varrho_{k,n})-X_{\phi}(T\wedge\varrho_{k,n})\bigr)\Bigr],
\end{aligned}	\end{equation}
where the last equality follows from the dominated convergence theorem, since the random variable $\sup_{t\in[0,T\wedge\tau_n]}\left|p^n(t)\bigl(X_{\bar{\phi}}(t)-X_{\phi}(t)\bigr)\right|$ is integrable, as a consequence of the Cauchy-Schwarz inequality together with the fact that $p^n\in\M^2$ and $\phi,\bar{\phi}\in\A_{\FF}^U(n)$.
Then, by applying the product rule:
\begin{equation}	\label{A2}	\begin{aligned}
& p^n(T\wedge\varrho_{k,n})\bigl(X_{\bar{\phi}}(T\wedge\varrho_{k,n})-X_{\phi}(T\wedge\varrho_{k,n})\bigr)	\\
&\quad= \int_0^{T\wedge\varrho_{k,n}}\!\bigl(X_{\bar{\phi}}(t-)-X_{\phi}(t-)\bigr)\,dp^n(t)
+\int_0^{T\wedge\varrho_{k,n}}\!p^n(t-)\,d\bigl(X_{\bar{\phi}}(t)-X_{\phi}(t)\bigr)
+\Bigl[p^n,X_{\bar{\phi}}-X_{\phi}\Bigr](T\wedge\varrho_{k,n})\,.
\end{aligned}	\end{equation}
Note that the first stochastic integral in \eqref{A2} belongs to $\M^2$, since $\bigl|X_{\bar{\phi}}(t-)-X_{\phi}(t-)\bigr|\leq k$ $P$-a.s. for $t\leq\varrho_{k,n}$ and $p^n\in\M^2$. Analogously, also the martingale part of the second stochastic integral in \eqref{A2} belongs to $\M^2$, since $|p^n(t-)|\leq k$ $P$-a.s. for $t\leq\varrho_{k,n}$ and $\bigl(X_{\bar{\phi}}-X_{\phi}\bigr)\in\calS^2$, due to the definition of $\A_{\FF}^U(n)$. Finally, we also have $\bigl[p^n,X_{\bar{\phi}}-X_{\phi}\bigr]-\bigl\langle p^n,X_{\bar{\phi}}-X_{\phi}\bigr\rangle\in\M^2$ (see e.g. \cite{HWY}, Section VI.4).
These observations allow us to write:
\begin{equation}	\label{A3}	\begin{aligned}
& E\Bigl[p^n(T\wedge\varrho_{k,n})\bigl(X_{\bar{\phi}}(T\wedge\varrho_{k,n})-X_{\phi}(T\wedge\varrho_{k,n})\bigr)\Bigr]	\\
&\quad= E\left[\int_0^{T\wedge\varrho_{k,n}}\!\bigl(\bar{\phi}(t)-\phi(t)\bigr)
\Bigl(p^n(t)b^n(t)+\sigma^n(t)q^n(t)+\int_{\R}\gamma^n(t,\zeta)r^n(t,\zeta)\nu(d\zeta)\Bigr)dt\right]\,.
\end{aligned}	\end{equation}
Hence, together with \eqref{A1} and \eqref{A3}, letting $k\rightarrow\infty$ and applying again the dominated convergence theorem (which is justified since $p^n$ satisfies $E\left[\int_0^T\!\bigl(p^n(t)\bigr)^2dt\right]<\infty$, $(q^n,r^n)\in L^2(B)\times G^2(\tN)$ and $\bar{\phi},\phi\in\A_{\FF}^U(n)$) lead to:
\begin{equation}	\label{A4}	\begin{aligned}
& E\Bigl[U\bigl(X_{\bar{\phi}}(T\wedge\tau_n)\bigr)-U\bigl(X_{\phi}(T\wedge\tau_n)\bigr)\Bigr]	\\
&\quad\leq E\left[\int_0^{T\wedge\tau_n}\!\bigl(\bar{\phi}(t)-\phi(t)\bigr)
\Bigl(p^n(t)b^n(t)+\sigma^n(t)q^n(t)+\int_{\R}\gamma^n(t,\zeta)r^n(t,\zeta)\nu(d\zeta)\Bigr)dt\right]	\\
&\quad= E\left[\int_0^{T\wedge\tau_n}\!\bigl(\bar{\phi}(t)-\phi(t)\bigr)
E\Bigl[p^n(t)b^n(t)+\sigma^n(t)q^n(t)+\int_{\R}\gamma^n(t,\zeta)r^n(t,\zeta)\nu(d\zeta)\bigr|\F_t\Bigr]dt\right] = 0	\,,
\end{aligned}	\end{equation}
where we have used the fact that $\{t<\tau_n\}\in\F_t$, since $\tau_n$ is an $\FF$-stopping time (see part (i) of Assumption \ref{lcl-bdd}), and that the strategies $\bar{\phi},\phi$ are $\cP_{\FF}$-measurable, and where the last equality is due to condition \eqref{BSDE-cond}.	This proves that $\phi\in\A_{\FF}^U(n)$ solves Problem \eqref{utility-loc}.	\\
Conversely, let $\phi\in\A_{\FF}^U(n)$ solve Problem \eqref{utility-loc}. Proceeding as in the proof of Theorem 3.1 of \cite{BO}, define $h(\delta):=E\left[U\bigl(X_{\phi+\delta\beta}(T\wedge\tau_n)\bigr)\right]$, where  $\beta=\bigl\{\beta(t);t\in[0,T]\bigr\}$ is a $\cP_{\FF}$-measurable process with $0\leq\beta(t)\leq 1$ $P$-a.s. for all $t\in[0,T]$ such that $\phi+\delta\beta\in\A_{\FF}^U(n)$, for every $\delta\in(-\bar{\delta},\bar{\delta}\,)$, for some constant $\bar{\delta}>0$. As can be readily verified, the optimality of $\phi\in\A_{\FF}^U(n)$ implies that
\[
0 = h'(0)
= E\left[U'\bigl(X_{\phi}(T\wedge\tau_n)\bigr)X_{\beta}(T\wedge\tau_n)\right]
\]
and, since $S^{\tau_n}$ is $P$-a.s. uniformly bounded\footnote{Note that the boundedness of $S^{\tau_n}$ was not used so far in this proof. In particular, this implies that condition \eqref{BSDE-cond} can act as a sufficient condition for optimality on the global investment horizon $[0,T]$ even when $S$ is not locally bounded.}, the same arguments used in \eqref{A1}-\eqref{A4} allow us to write
\begin{equation}	\label{A5}	\begin{aligned}
& E\left[\int_0^{T\wedge\tau_n}\!\beta(t)\Bigl(p^n(t)b^n(t)+\sigma^n(t)q^n(t)+\int_{\R}\gamma^n(t,\zeta)r^n(t,\zeta)\nu(d\zeta)\Bigr)dt\right]	\\
&\quad= E\left[U'\bigl(X_{\phi}(T\wedge\tau_n)\bigr)X_{\beta}(T\wedge\tau_n)\right] = 0\,.
\end{aligned}	\end{equation}
Take then a process $\beta$ of the form $\beta(t):=\ind_A\ind_{[t_0,t_0+h]}(t)$, for some $t_0<T$ and $h>0$, with $t_0+h\leq T$, and where $A\in\F_{t_0}$. Since $S^{\tau_n}$ is $P$-a.s. uniformly bounded, it is clear that $\beta$ satisfies the above assumptions and, moreover, with this choice of $\beta$, equation \eqref{A5} leads to
\[
E\left[\int_{t_0\wedge\tau_n}^{(t_0+h)\wedge\tau_n}\!\ind_A\Bigl(p^n(t)b^n(t)+\sigma^n(t)q^n(t)+\int_{\R}\gamma^n(t,\zeta)r^n(t,\zeta)\nu(d\zeta)\Bigr)dt\right] = 0
\]
and differentiating with respect to $h$ at $h=0$ gives
\[
E\left[\ind_{\{t_0<\tau_n\}}\ind_A\Bigl(p^n(t_0)b^n(t_0)+\sigma^n(t_0)q^n(t_0)+\int_{\R}\gamma^n(t_0,\zeta)r^n(t_0,\zeta)\nu(d\zeta)\Bigr)\right] = 0\,.
\]
Since $t_0$ and $A\in\F_{t_0}$ are arbitrary and $\tau_n$ is an $\FF$-stopping time, we have thus proven that \eqref{BSDE-cond} holds.

\subsubsection*{Proof of Proposition \ref{PIEMM}.}

Due to Assumption \ref{lcl-bdd}, the stopped process $S^{\tau_n}$ is $\FF$-adapted. Hence, the conditional version of Bayes' rule gives, for all $0\leq s\leq t\leq T$
\begin{equation}	\label{proof-1}
E_{Q_{\theta}}\bigl[S^{\tau_n}(t)|\F_s\bigr]-S^{\tau_n}(s)
= \frac{E\bigl[G_{\theta}(t)S(t\wedge\tau_n)
-G_{\theta}(s)S(s\wedge\tau_n)\bigr|\F_s\bigr]}{E[G_{\theta}(s)|\F_s]}.
\end{equation}
Furthermore, by applying the integration by parts formula (see e.g. \cite{OS1}, Lemma 3.6)
\[ \ba
d\bigl(G_\theta(t)S(t\wedge\tau_n)\bigr) 
&= G_\theta(t-)\,dS(t\wedge\tau_n)+S(t\wedge\tau_n-)\,dG_\theta(t)+d\bigl[G_\theta,S\bigr](t\wedge\tau_n) \\
&= G_\theta(t-)\,\ind_{\{t\leq\tau_n\}}\!
\left(b(t)dt+\sigma(t)dB(t)+\int_\R\!\gamma(t,\zeta)\tN(dt,d\zeta)\right) \\
&\quad +S(t\wedge\tau_n-)
\left(G_\theta(t-)\Bigl(\theta_0(t)dB(t)+\int_\R\!\theta_1(t,\zeta)\tN(dt,d\zeta)\Bigr)\right)\\
&\quad +\ind_{\{t\leq\tau_n\}}G_\theta(t)\sigma(t)\theta_0(t)dt
+\ind_{\{t\leq\tau_n\}}\!\int_\R\!G_\theta(t-)\gamma(t,\zeta)\theta_1(t,\zeta)N(dt,d\zeta).
\ea	\]
Collecting the $dt$-terms we get, for all $t\in[0,T]$
\be	\label{proof-2}
G_\theta(t)S(t\wedge\tau_n) = S_0+\int_0^{t\wedge\tau_n}\!\!G_\theta(u)\!\left(b(u)+\sigma(u)\theta_0(u)
+\!\int_\R\!\gamma(u,\zeta)\theta_1(u,\zeta)\nu(d\zeta)\right)\!du+(\text{$\GG$-local martingale}).
\ee
Since $(\theta_0,\theta_1)\in L^2(B)\times G^2(\tN)$ and $S^{\tau_n}$, $b^n$, $\sigma^n$ and $\gamma^n$ are bounded, it can be easily verified that the $\GG$-local martingale term appearing in \eqref{proof-2} is actually a true $\GG$-martingale and, hence, we can write
\begin{align}
&E\bigl[G_{\theta}(t)S(t\wedge\tau_n)-G_{\theta}(s)S(s\wedge\tau_n)\bigr|\F_s\bigr]	
= E\Bigl[E\bigl[G_{\theta}(t)S(t\wedge\tau_n)-G_{\theta}(s)S(s\wedge\tau_n)|\G_s\bigr]\bigr|\F_s\Bigr]	\nonumber\\
&= E\left[\int_s^t\!\ind_{\{u<\tau_n\}}G_\theta(u)\left(b(u)+\sigma(u)\theta_0(u)
+\int_\R\!\gamma(u,\zeta)\theta_1(u,\zeta)\nu(d\zeta)\right)\!du\Bigr|\F_s\right].
\label{proof-3}
\end{align}
In view of equation \eqref{proof-1}, this shows that $S^{\tau_n}$ is a $(Q_{\theta},\FF)$-martingale if and only if \eqref{proof-3} is $P$-a.s. equal to zero for all $s,t\in[0,T]$ with $s\leq t$. Since $\tau_n$ is an $\FF$-stopping time (see Assumption \ref{lcl-bdd}), this is equivalent to the validity of condition \eqref{PIEMM-cond}.

\subsubsection*{Proof of Lemma \ref{moments}.}

We use Minkowski's inequality and the Burkholder-Davis-Gundy inequality to get, for any $r\in(1,\infty)$
\begin{gather*}
E\left[\underset{t\in[0,T]}{\sup}\left|\int_0^t\!\sigma(u)dB(u)
+\int_0^t\!\int_\R\gamma(u,\zeta)\tN(du,d\zeta)\right|^r\right]^{1/r}	\\
\leq E\left[\underset{t\in[0,T]}{\sup}\left|\int_0^t\!\sigma(u)dB(u)\right|^r\right]^{1/r}
+E\left[\underset{t\in[0,T]}{\sup}\left|\int_0^t\!\int_\R\gamma(u,\zeta)\tN(du,d\zeta)\right|^r\right]^{1/r}	\\
\leq C E\left[\left(\int_0^T\!\!\sigma^2(u)du\right)^{r/2}\right]^{1/r}
+CE\left[\left(\int_0^T\!\!\int_{\R}\gamma^2(u,\zeta)N(du,d\zeta)\right)^{r/2}\right]^{1/r}
<\infty
\end{gather*}
where $C$ is a positive constant and the finiteness of the last expectations follows from the uniform boundedness of $\sigma$ and $\gamma$ together with the finiteness of the measure $\nu$. Due to \eqref{S} and to the boundedness of $b$, this suffices to prove the claim.

\subsubsection*{Proof of Theorem \ref{ex-defl}.}

Let the financial market be globally viable, with a corresponding family of optimal strategies $\{\phi^*_n\}_{n\in\N}$, with $\phi^*_n\in\A_{\FF}^U(n)$, for all $n\in\N$. For every $k\in\N$, define the process $Z_{\phi_k^*}=\bigl\{Z_{\phi_k^*}(t);t\in[0,T]\bigr\}$ by
\be	\label{Z^k}
Z_{\phi_k^*}(t) := \frac{E\bigl[U'\bigl(X_{\phi^*_k}(T\wedge\tau_k)\bigr)|\F_t\bigr]}
{E\bigl[U'\bigl(X_{\phi^*_k}(T\wedge\tau_k)\bigr)\bigr]},
\qquad\text{for all }t\in[0,T]\,.
\ee
Define then the $\FF$-adapted process $Z=\bigl\{Z(t);t\in[0,T]\bigr\}$ by $Z(t):=\prod_{k=1}^{\infty}\frac{Z_{\phi^*_k}(t\wedge\tau_k)}{Z_{\phi^*_k}(t\wedge\tau_{k-1})}$, for all $t\in[0,T]$, with $\tau_0:=0$. Since $U$ is strictly increasing, the process $Z$ is strictly positive and satisfies $Z(0)=1$. Furthermore, for every $n\in\N$ and $s,t\in[0,T]$ with $s\leq t$, using the $\FF$-martingale property of every $Z_{\phi^*_k}$ together with the fact that $\tau_n$ is an $\FF$-stopping time, we get
\be	\label{defl-comp}	\begin{aligned}
E\bigl[Z(t\wedge\tau_n)|\F_s\bigr]
&= E\bigl[\ind_{\{\tau_n\leq s\}}Z(t\wedge\tau_n)+\ind_{\{\tau_n>s\}}Z(t\wedge\tau_n)|\F_s\bigr]	\\
&= \ind_{\{\tau_n\leq s\}}Z(\tau_n)+\ind_{\{\tau_n>s\}}E\bigl[Z(t\wedge\tau_n)|\F_s\bigr]	\\
&= \ind_{\{\tau_n\leq s\}}Z(\tau_n)
+\sum_{\ell=1}^n\ind_{\{\tau_{\ell-1}\leq s<\tau_{\ell}\}}
E\left[Z(t\wedge\tau_n)|\F_s\right]	\\
&= \ind_{\{\tau_n\leq s\}}Z(\tau_n)
+\sum_{\ell=1}^n\ind_{\{\tau_{\ell-1}\leq s<\tau_{\ell}\}}
\prod_{k=1}^{\ell-1}\frac{Z_{\phi^*_k}(t\wedge\tau_k)}{Z_{\phi^*_k}(t\wedge\tau_{k-1})}
E\left[\,\prod_{k=\ell}^n\frac{Z_{\phi^*_k}(t\wedge\tau_k)}{Z_{\phi^*_k}(t\wedge\tau_{k-1})}\Bigr|\F_s\right]	\\
&= \ind_{\{\tau_n\leq s\}}Z(\tau_n)
+\sum_{\ell=1}^n\ind_{\{\tau_{\ell-1}\leq s<\tau_{\ell}\}}
\prod_{k=1}^{\ell-1}\frac{Z_{\phi^*_k}(s\wedge\tau_k)}{Z_{\phi^*_k}(s\wedge\tau_{k-1})}
\frac{E\bigl[Z_{\phi^*_{\ell}}(t\wedge\tau_{\ell})|\F_s\bigr]}{Z_{\phi^*_{\ell}}(\tau_{\ell-1})}	\\
&= \ind_{\{\tau_n\leq s\}}Z(\tau_n)
+\sum_{\ell=1}^n\ind_{\{\tau_{\ell-1}\leq s<\tau_{\ell}\}}
\prod_{k=1}^{\ell}\frac{Z_{\phi^*_k}(s\wedge\tau_k)}{Z_{\phi^*_k}(s\wedge\tau_{k-1})}	\\
&= \ind_{\{\tau_n\leq s\}}Z(\tau_n)+\ind_{\{\tau_n>s\}}Z(s)
= Z(s\wedge\tau_n)
\end{aligned}	\ee
where the fifth equality follows from a repeated application of the tower property of conditional expectation.
We have thus shown that, for every $n\in\N$, the process $Z(\cdot\wedge\tau_n)$ is an $\FF$-martingale. Since the sequence $\{\tau_n\}_{n\in\N}$ is composed of $\FF$-stopping times (see part (i) of Assumption \ref{lcl-bdd}), this implies the $\FF$-local martingale property of $Z$. 
In order to show that $Z$ is a PILMD, it remains to prove the $\FF$-local martingale property of $ZS$. Noting that $Z(t)S(t)=S(0)\prod_{k=1}^{\infty}\frac{S(t\wedge\tau_k)}{S(t\wedge\tau_{k-1})}\frac{Z_{\phi^*_k}(t\wedge\tau_k)}{Z_{\phi^*_k}(t\wedge\tau_{k-1})}$, the latter can be established by means of computations analogous to \eqref{defl-comp}, using the fact that $Z_{\phi^*_k}\,S^{\tau_k}$ is an $\FF$-martingale, for every $k\in\N$, as a consequence of the global viability of the financial market (see Definition \ref{viability-glob}) together with the implication (i) $\Rightarrow$ (ii) of Theorem \ref{viability-PIEMM-loc}.

\vspace{1cm}

\setstretch{1.15}

\end{document}